%
%
%
%
%
\RequirePackage{fix-cm}
\documentclass{svjour3}                     
\smartqed  
\usepackage{graphicx}
%
%
%
%
%

\usepackage{pslatex}
\usepackage{amsfonts,color}
\usepackage{amssymb,amsmath,latexsym}
\usepackage[justification=centering]{caption}

\makeatletter
\def\inmod#1{\allowbreak\mkern5mu({\operator@font mod}\,\,#1)}

\newcommand{\Rmnum}[1]{\expandafter\@slowromancap\romannumeral #1@}
\makeatother

\begin{document}

\title{
A class of narrow-sense BCH codes
over $\mathbb{F}_q$ of length
$\frac{q^m-1}{2}$
}


\author{Xin Ling\and Sihem Mesnager \and Yanfeng Qi\and Chunming Tang }


\institute{
X. Ling \at
School of Mathematics and Information, China West Normal University, Nanchong
637002, China.
\email{xinlingcwnu@163.com}.
\and
S. Mesnager \at
              the Department of Mathematics, University of Paris
VIII, 93526 Saint-Denis, France, with LAGA UMR 7539,  CNRS, Sorbonne Paris
Cit\'e, University of Paris XIII, 93430 Paris,  France, and also with Telecom
ParisTech, 75013 Paris, France. \email{ smesnager@univ-paris8.fr}. \and
Y. Qi \at
                School of Science, Hangzhou Dianzi University, Hangzhou,
Zhejiang, 310018, China. 
\email{qiyanfeng07@163.com}.
\and
C. Tang \at
School of Mathematics and Information, China West Normal University, Nanchong
637002, China, and Department of Mathematics, The Hong Kong University of
Science and Technology, Clear Water Bay, Kowloon, Hong Kong, China.
              \email{tangchunmingmath@163.com}. 
}

\date{Received: date / Accepted: date}

\maketitle

\begin{abstract}
BCH codes with efficient encoding and decoding algorithms have many applications
in communications, cryptography and combinatorics design.  This paper studies a class of linear codes of length $
\frac{q^m-1}{2}$ over $\mathbb{F}_q$ with special trace representation, where $q$ is an odd prime power.  With the help of  the inner distributions of some subsets of association schemes from bilinear forms associated with quadratic forms, we determine the weight enumerators of these codes.  From determining
some  cyclotomic coset leaders $\delta_i$ of
cyclotomic cosets modulo  $
\frac{q^m-1}{2}$, we prove that
narrow-sense BCH codes of length $
\frac{q^m-1}{2}$ with designed distance
$\delta_i=\frac{q^m-q^{m-1}}{2}-1-\frac{q^{
\lfloor \frac{m-3}{2} \rfloor+i}-1}{2}$ have
the corresponding trace representation, and have the  minimal
distance $d=\delta_i$ and the Bose distance
$d_B=\delta_i$, where
$1\leq i\leq \lfloor \frac{m+3}{4} \rfloor$.

\keywords{Linear codes \and BCH codes  \and association schemes  \and  the weight distribution \and quadratic forms}
\subclass{94B05 \and 94B15 
\and 05E30 \and 15A63 }
\end{abstract}

\section{Introduction}
As an important class of cyclic codes,
BCH codes with efficient encoding and decoding algorithms
have many applications
in communications, cryptography and combinatorics design.  They were independently discovered by  Hocquenghem \cite{H59} and
Bose, Ray-Chaudhuri \cite{BR60}, and were generalized from binary fields to finite fields by Gorenstein and Zierler \cite{GZ61}.  The determination of parameters of BCH codes is an interesting and difficult problem. There are many results on BCH codes
\cite{AKS07,B70,C90,C98,CHZ06,D15,D18,DDZ15,DFZ17,DFK97,FL08,FTKL86,HS73,K69,KT69,KL99,KL01,LDL-17,LLDL17,LDL17,MS77,M62,P69,W69,Y18,YF00,YH96,ZD14}.

Let $m$ be a positive integer and $\mathbb{F}_{q^m}$ be a finite field. Let
$\beta$ be an element in $\mathbb{F}_{q^m}$ with order $n$, where $n\mid (q^m-1)$.
A BCH code $\mathcal{C}$ over $\mathbb{F}_q$ with designed distance $\delta$ is a cyclic code with  a generator polynomial $g(x)$, where $g(x)$ is determined by its $\delta-1$ consecutive roots
$\beta^{l},\beta^{l+1},\ldots, \beta^{l+\delta-1}$. The minimal distance of
the BCH code $\mathcal{C}$ is greater than
$\delta$. When $n=q^m-1$, the code
$\mathcal{C}$ is called primitive. When
$l=1$, this code $\mathcal{C}$ is called narrow-sense. Two narrow-sense BCH codes with different designed distances may be the same. The Bose distance $d_B$ of a BCH code is the largest
designed distance. The Bose distance satisfies that $d_B\leq d\leq d_B+4$ \cite{C98}, where $d$ is the minimum distance. The determination of
the minimum distance $d$ and the Bose distance of a narrow-sense BCH
code attracts much interest
\cite{AKS07,C90,C98,DDZ15,KL72,KLP67,KT69,M80,M62,YF00}.
Some results on $[n,k,d]$ narrow-sense  BCH codes over $\mathbb{F}_{q}$ such that
$d=d_B$   are listed:
\begin{itemize}
\item the primitive case: $n=q^m-1$
\begin{itemize}
\item $d=d_B=q^i-1$, where
$1\leq i\leq m-1$ \cite{P69};
\item $q=2$, $d=d_B=2^{m-1}-2^i-1$, where
 $\frac{m-2}{2}\leq i\leq m-
\lfloor \frac{m}{3}\rfloor-1$  \cite{B70};
\item $q=2$, $d=d_B=2^{m-1-s}-2^{m-1-i-s}$, where  $1\leq i\leq m-s-2$ and
$0\leq s\leq m-2i$ \cite{KL72};
\item $d=d_B=q^i+1$, where $(m-i)\mid i$
or $2i\mid m$ \cite{DDZ15};
\item $d=d_B=(q-l_0)q^{m-l_1-1}-1$, where
$0\leq l_0\leq q-2$ and $0\leq l_1\leq m-1$  \cite{D15};
\item $d=d_B=q^m-q^{m-1}-q^i-1$, where
$\frac{m-2}{2}\leq i\leq m-\lfloor \frac{m}{3}\rfloor-1$  [Theorem 1.2, \cite{L17}].
\end{itemize}
\item $n=\frac{q^m-1}{2}$
\begin{itemize}
\item $q=3$, $d=d_B=
\frac{q^m-q^{m-1}}{2}-1-\frac{q^{
\lfloor \frac{m-3}{2} \rfloor+i}-1}{2}$ for
$i=1,2$ \cite{LDXG17};
\item $d=d_B=
\frac{q^m-q^{m-1}}{2}-1-\frac{q^{
\lfloor \frac{m-3}{2} \rfloor+i}-1}{2}$ for
$i=1,2$ \cite{ZSK19};
\item $d=d_B=
\frac{q^m-q^{m-1}}{2}-1-\frac{q^{
\lfloor \frac{m-3}{2} \rfloor+i}-1}{2}$, where $1\leq i\leq \lfloor \frac{m+3}{4}
\rfloor$  [this paper].
\end{itemize}
\end{itemize}

To determine the minimal distance $d$
and the Bose distance $d_B$ of
narrow-sense primitive BCH codes with designed
distance $q^m-q^{m-1}-q^i-1$, Li
\cite{L17}  employed the framework of
association schemes from bilinear forms
associated with quadratic forms and presented the weight enumerators of these BCH codes, where
these quadratic forms are corresponding to the trace representation of BCH codes.
In \cite{TDX19}, Tang et al. studied a class of
ternary linear codes with trace representation,
determined the weight distributions  of some shortened codes and punctured codes of these three-weight subcodes, and presented some
$2$-designs from these codes.

Motivated by these papers and their ideas, this paper studies the following  linear codes
$\mathcal{C}_{1,h}$ (resp. $\mathcal{C}_{2,h}$)  of length
$n=\frac{q^m-1}{2}$ with special trace representation over finite field
$\mathbb{F}_q$
for $m$ odd (resp. even), where $q$ is odd prime power and
$$
\mathcal{C}_{1,h}=\left\{
\left( \mathrm{Tr}_1^m\left (\sum_{j=h}^{\frac{m-1}{2}} a_j \alpha^{\left (q^{j}+1 \right )l} \right )+a \right )_{l=0}^{n-1}:
a_h,\ldots, a_{\frac{m-1}{2}}\in \mathbb{F}_{q^m}, a\in \mathbb{F}_{q} \right\}
$$
and
$$
\mathcal{C}_{2,h}=
\left\{ \left( \mathrm{Tr}_1^m\left (a_{\frac{m}{2}}\alpha^{\left (q^{\frac{m}{2}}+1 \right )l}+\sum_{j=h}^{\frac{m-2}{2}} a_j \alpha^{\left (q^{j}+1 \right )l} \right )+a \right )_{l=0}^{n-1}:
a_h,\ldots, a_{\frac{m-2}{2}}\in \mathbb{F}_{q^m}, a_{\frac{m}{2}}\in \mathbb{F}_{q^{\frac{m}{2}}},
a\in \mathbb{F}_q
 \right\}.
$$
These codes correspond to
quadratic forms. With the help of  the theory of
association schemes from bilinear forms associated with quadratic forms, we determine
the weight enumerators of these codes.
Determining some largest cyclotomic coset leaders of cyclotomic cosets modulo $n=\frac{q^m-1}{2}$,
we prove that a narrow-sense BCH code  with
designed distance $\delta_i=
\frac{q^m-q^{m-1}}{2}-1-\frac{q^{
\lfloor \frac{m-3}{2} \rfloor+i}-1}{2}$ has
the above trace representation and has  the minimal distance $d$ such that
$d=d_B=\delta_i$, where $1\leq i\leq \lfloor \frac{m+3}{4}
\rfloor$.

The rest of the paper is organized as follows.
Section 2 introduces some basic results on
BCH codes, quadratic forms and association schemes. Section 3 studies some linear codes with trace representation, uses association schemes to determine the weight enumerators of these codes, and presents the minimum distance $d$ and the Bose distance of the narrow-sense BCH codes of length $
n=\frac{q^m-1}{2}$ with designed distance
$\delta_i$. Section 4 makes a conclusion.
\section{Preliminaries}
In this section, some results on BCH codes,
quadratic forms and association schemes are introduced.

\subsection{BCH codes}
Let $q$ be a prime power. An $[n,k,d]$ linear code $\mathcal{C}$ over the finite field $\mathbb{F}_q$ is a
$k$-dimensional subspace of $\mathbb{F}_q^n$, where $d$ is the minimal distance of $\mathcal{C}$.
A linear code is called a cyclic code if
$(c_0,c_1,\ldots,c_{n-1})\in \mathcal{C}$ implies that  $(c_{n-1},c_0, \ldots,c_{n-2})\in \mathcal{C}$. We also write a cyclic code as a
principal ideal of the ring
$\mathbb{F}_q[x]/(x^n-1)$, since there is a map
\begin{align*}
\psi: \mathbb{F}_q^n&\longrightarrow \mathbb{F}_q[x]/(x^n-1)\\
(c_0,c_1,\ldots,c_{n-1})&
\longmapsto c_1+c_1x+\cdots+c_{n-1}x^{n-1}.
\end{align*}
Then $\mathcal{C}=\langle g(x)\rangle$, where
$g(x)$ is a monic polynomial and is called
the generator polynomial of $\mathcal{C}$.
Note that $g(x)|(x^n-1)$. The polynomial $
h(x)=g(x)|(x^n-1)$ is called the parity-check
polynomial of $\mathcal{C}$.
The  code $\mathcal{C}$ is called a
cyclic code with $s$ zeros (resp.
$s$ nonzeros) if $g(x)$
(resp. $h(x)$) can be factored into a product of $s$ irreducible polynomials over $\mathbb{F}_q$.

Let $n$ be a positive integer and $m$ be the smallest positive  integer such that
$n|(q^n-1)$. Let $\alpha$ be a primitive element of $\mathbb{F}_{q^m}$ and
$\beta=\alpha^{\frac{q^m-1}{n}}$. Let
$m_i(x)$ be the minimal polynomial of
$\beta^i$ over $\mathbb{F}_q$, where
$0\leq i\leq n-1$. For each
$2\leq \delta\leq n$, define two polynomials
$$
g_{(n,q,m,\delta)}(x)=lcm(m_1(x),\ldots,m_{
\delta-1}(x))  ~\text{and}~
\tilde{g}_{(n,q,m,\delta)}(x)= (x-1)
g_{(n,q,m,\delta)}(x),
$$
where $lcm$ denotes the least common multiple of the polynomials.
Let $\mathcal{C}_{(n,q,m,\delta)}
=\langle g_{(n,q,m,\delta)}(x) \rangle$ and
$\tilde{\mathcal{C}}_{(n,q,m,\delta)}
=\langle \tilde{g}_{(n,q,m,\delta)}(x) \rangle$. Then $\mathcal{C}_{(n,q,m,\delta)}$
is called a narrow-sense BCH code with designed distance $\delta$ and
$\tilde{\mathcal{C}}_{(n,q,m,\delta)}$ is the even-like subcode of $\mathcal{C}_{(n,q,m,\delta)}$. Note that
$dim(\tilde{\mathcal{C}}_{(n,q,m,\delta)})
=dim(\mathcal{C}_{(n,q,m,\delta)})-1$ and
the minimal distances of $\mathcal{C}_{(n,q,m,\delta)}$ and $\tilde{\mathcal{C}}_{(n,q,m,\delta)}$ are at least $\delta$ and $\delta+1$,
respectively.  If $n=q^m-1$, the code
$\mathcal{C}_{(n,q,m,\delta)}$ is called a narrow-sense primitive BCH code. Note that
two BCH codes with different designed distances
may be the same.
For a narrow-sense BCH code
$\mathcal{C}_{(n,q,m,\delta)}$,
the Bose distance \cite{MS77} of this code is defined by the largest designed distance
$d_B=max (\delta')$, where
${\mathcal{C}_{(n,q,m,\delta')}=
\mathcal{C}_{(n,q,m,\delta)}}$.

The narrow-sense  BCH code $\mathcal{C}_{(n,q,m,\delta)}$ has close relation with $q$-cyclotomic cosets modulo $n$.  The $q$-cyclotomic coset of $s$ modulo $n$ is defined by
$$
C_s=\{s,sq,sq^2,\ldots,sq^{l_s-1}\}
\mod n \subseteq \mathbb{Z}_n,
$$
where $0\leq s\leq n-1$ and
$l_s$ called the size of the
$q$-cyclotomic coset  is the smallest positive integer such that $s\equiv sq^{l_s} \mod n$.
Note that  $l_s\mid m$.
The coset leader of $C_s$ is the smallest integer in $C_s$. Let
$\Gamma_{n,q}$ be the set of
all the coset leaders. Then $C_s\bigcap C_t
=\emptyset$ and $\mathbb{Z}_n=\bigcup_{s
\in \Gamma_{(n,q)}}C_s$, where $s,t
\in \Gamma_{(n,q)}$ and $s\neq t$.

Let $m_s(x)$ be the minimal polynomial of
$\beta^s$ over $\mathbb{F}_q$. Then
$$
m_s(x)=\prod_{i\in C_s}(x-\beta^i)
\in \mathbb{F}_q[x],
$$
$deg(m_s(x))=l_s$ and
$$
x^n-1
=\prod_{s\in \Gamma_{(n,q)}}
m_{s}(x).
$$
Hence, the generator polynomial
$g_{(n,q,m,\delta)}(x)=
\prod_{
s\in \cup_{j=1}^{\delta-1}C_j}m_s(x)$,
the parity-check
polynomial of $\mathcal{C}_{(n,q,m,\delta)}$ is $$
h(x)=(x-1)\prod_{s\geq \delta,
s\in \Gamma_{(n,q)}}m_s(x),
$$
and the dimension of  $\mathcal{C}_{(n,q,m,\delta)}$ is
$$
\sum_{s\geq \delta, s\in \Gamma_{(n,q)}}l_s+1.
$$
The following lemma gives the relation between the Bose distance $d_B$ of $\mathcal{C}_{(n,q,m,\delta)}$ and coset leaders.
\begin{lemma}[Proposition 4, \cite{LDXG17}]\label{bose-d}
The Bose distance $d_B$ of the narrow-sense BCH code $\mathcal{C}_{(n,q,m,\delta)}$ is a coset leader of a $q$-cyclotomic coset modulo $n$.
Furthermore, if $\delta$ is a coset leader, then $d_B=\delta$.
\end{lemma}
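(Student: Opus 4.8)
The plan is to translate everything into the language of the \emph{defining set}
$$T(\delta):=\bigcup_{j=1}^{\delta-1}C_j\subseteq\mathbb{Z}_n,$$
the set of exponents $i$ for which $\beta^i$ is a root of $g_{(n,q,m,\delta)}(x)$; indeed $g_{(n,q,m,\delta)}(x)=\prod_{i\in T(\delta)}(x-\beta^i)$. A narrow-sense BCH code equals the ideal generated by this monic generator polynomial, and since $\beta$ has order $n$ the assignment $T(\delta)\mapsto g_{(n,q,m,\delta)}$ is injective; hence $\mathcal{C}_{(n,q,m,\delta)}=\mathcal{C}_{(n,q,m,\delta')}$ if and only if $T(\delta)=T(\delta')$. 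Because $T(\delta)\subseteq T(\delta')$ whenever $\delta\le\delta'$, the chain $\mathcal{C}_{(n,q,m,\delta)}\supseteq\mathcal{C}_{(n,q,m,\delta+1)}\supseteq\cdots$ is nonincreasing, so any equality $\mathcal{C}_{(n,q,m,\delta)}=\mathcal{C}_{(n,q,m,\delta'')}$ with $\delta''>\delta$ forces $T(\delta)=T(\delta+1)=\cdots=T(\delta'')$.

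The one fact driving the argument is that, for $1\le\delta\le n-1$, $T(\delta+1)=T(\delta)\cup C_\delta$, and therefore
$$T(\delta+1)=T(\delta)\iff C_\delta\subseteq T(\delta)\iff \delta\in\bigcup_{j=1}^{\delta-1}C_j\iff \delta\text{ is not a }q\text{-cyclotomic coset leader}.$$
The last equivalence is the only spot that needs a careful unwinding of definitions: using that distinct cyclotomic cosets are disjoint, $\delta$ lies in some $C_j$ with $j\le\delta-1$ exactly when $\delta$ is not the least element of its own coset $C_\delta$, i.e.\ exactly when $\delta\notin\Gamma_{(n,q)}$.

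With the dictionary and this equivalence in hand, both assertions are immediate. Suppose $\delta$ is a coset leader (so $2\le\delta\le n-1$); then $T(\delta)\subsetneq T(\delta+1)$, hence $\mathcal{C}_{(n,q,m,\delta+1)}\subsetneq\mathcal{C}_{(n,q,m,\delta)}$, and by monotonicity of the chain the same strict inclusion holds for every $\delta'\ge\delta+1$; since $d_B\ge\delta$ trivially, we get $d_B=\delta$. Conversely, let $d_B$ be the Bose distance of $\mathcal{C}_{(n,q,m,\delta)}$; if $d_B\le n-1$ and $d_B$ were not a coset leader, the displayed equivalence would give $T(d_B+1)=T(d_B)$, hence $\mathcal{C}_{(n,q,m,d_B+1)}=\mathcal{C}_{(n,q,m,d_B)}=\mathcal{C}_{(n,q,m,\delta)}$, contradicting the maximality of $d_B$; so $d_B$ is a coset leader. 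The only case not covered is $d_B=n$, which happens precisely when $T(d_B)=\mathbb{Z}_n\setminus\{0\}$, i.e.\ $\mathcal{C}_{(n,q,m,\delta)}$ is the repetition code; there $d_B=n\equiv 0\pmod n$ is the leader of $C_0=\{0\}$, so the statement persists if one reads coset leaders modulo $n$. In short, once the defining-set picture is set up there is no genuine obstacle — only the chain of equivalences in the displayed line, and the degenerate repetition-code case, need a moment's care.
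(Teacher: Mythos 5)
Your argument is correct. Note that the paper itself gives no proof of this lemma --- it is imported verbatim as Proposition 4 of \cite{LDXG17} --- and your defining-set argument (reducing everything to the equivalence $T(\delta+1)=T(\delta)\iff\delta\notin\Gamma_{(n,q)}$, plus monotonicity of the chain of defining sets) is precisely the standard proof given in that reference, with the degenerate repetition-code case $d_B=n$ handled with appropriate care.
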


From Delsarte's theorem \cite{D75}, a cyclic code has the following  trace representation.
\begin{proposition}
Let $q$ be a prime power, $n$ be a positive integer, $m$ be the smallest positive integer such that $n|(q^n-1)$, and
$\beta$ be  an $n$-th primitive element of
$\mathbb{F}_{q^m}$.   Let
$\mathcal{C}$ be a cyclic code of
length $n$ over $\mathbb{F}_{q}$ with
$s$ nonzeros. Let $\beta^{i_1},
\ldots, \beta^{i_s}$ be $s$ roots of
its parity-check polynomial $h(x)$, which are not conjugate with each other. Then
the code $\mathcal{C}$ has the following trace representation
$$
\mathcal{C}=
\{c(a_1,\ldots,a_s): a_j\in \mathbb{F}_{q^{l_j}}, ~\text{for}~1\leq j\leq s\},
$$
where $l_j$ is the size of
the $q$-cyclotomic coset $C_{i_j}$, $\mathrm{Tr}_1^{l_j}$ is the trace function
from $\mathbb{F}_{q^{l_j}}$ to
$\mathbb{F}_q$, and $c(a_1,\ldots,a_s)=
\left( \sum_{j=1}^{s}
\mathrm{Tr}_1^{l_j}(a_j\beta^{-li_j}) \right)_{l=0}^{n-1}$.
\end{proposition}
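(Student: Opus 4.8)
The plan is to exhibit an explicit $\mathbb{F}_q$-linear map $\varphi:\prod_{k=1}^{s}\mathbb{F}_{q^{l_k}}\to\mathbb{F}_q^{n}$, $(a_1,\dots,a_s)\mapsto c(a_1,\dots,a_s)$, and to prove two things: that the image of $\varphi$ lies in $\mathcal{C}$, and that $\varphi$ is injective. Since the domain has $\prod_{k=1}^{s}q^{l_k}=q^{\sum_{k}l_k}$ elements whereas $\dim_{\mathbb{F}_q}\mathcal{C}=\deg h(x)=\sum_{k=1}^{s}l_k$, an injective $\varphi$ whose image is contained in $\mathcal{C}$ must be a bijection onto $\mathcal{C}$, which is exactly the claim. (Here $l_k$ is the size of the $q$-cyclotomic coset $C_{i_k}$, and $h(x)=\prod_{k=1}^{s}m_{i_k}(x)$ because the $\beta^{i_k}$ represent its $s$ pairwise non-conjugate irreducible factors, so $\deg h(x)=\sum_k l_k$.)

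First I would record the spectral membership test for $\mathcal{C}$. Put $g(x)=(x^n-1)/h(x)$, so $\mathcal{C}=\langle g(x)\rangle$; as $n\mid q^m-1$ we have $\gcd(n,q)=1$, hence $x^n-1$ and therefore $g(x)$ are separable, with root set precisely $\{\beta^{j}:j\in\mathbb{Z}_n\setminus\bigcup_{k=1}^{s}C_{i_k}\}$. Thus a vector $c=(c_0,\dots,c_{n-1})\in\mathbb{F}_q^n$ belongs to $\mathcal{C}$ if and only if $\sum_{l=0}^{n-1}c_l\beta^{lj}=0$ for every $j\in\mathbb{Z}_n\setminus\bigcup_{k}C_{i_k}$. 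I would then substitute $c_l=\sum_{k=1}^{s}\mathrm{Tr}_1^{l_k}(a_k\beta^{-li_k})$, expand each trace as $\mathrm{Tr}_1^{l_k}(y)=\sum_{t=0}^{l_k-1}y^{q^{t}}$, interchange the order of summation, and use the relation $\sum_{l=0}^{n-1}\beta^{lr}=n$ if $n\mid r$ and $0$ otherwise (with $n\neq 0$ in $\mathbb{F}_q$ since $\gcd(n,q)=1$). This yields
$$
\sum_{l=0}^{n-1}c_l\beta^{lj}
= n\sum_{\substack{1\le k\le s,\ 0\le t<l_k\\ i_kq^{t}\equiv j\pmod{n}}}a_k^{q^{t}}.
$$
For $j\notin\bigcup_k C_{i_k}$ no pair $(k,t)$ is admissible, so the right-hand side vanishes; hence $c(a_1,\dots,a_s)\in\mathcal{C}$.

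For injectivity I would apply the same identity at the frequency $j=i_{k_0}$, for each fixed $k_0$. Because $i_1,\dots,i_s$ lie in pairwise distinct $q$-cyclotomic cosets, the congruence $i_kq^{t}\equiv i_{k_0}\pmod n$ with $0\le t<l_k$ forces $k=k_0$; and since $l_{k_0}$ is by definition the least positive integer with $i_{k_0}q^{l_{k_0}}\equiv i_{k_0}\pmod n$, it then forces $t=0$. Hence $\sum_{l=0}^{n-1}c_l\beta^{li_{k_0}}=n\,a_{k_0}$, so $c(a_1,\dots,a_s)=0$ implies $a_{k_0}=0$ for every $k_0$. Combined with the dimension count of the first paragraph, this completes the proof. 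The only genuine work is the bookkeeping inside the double sum — determining, for a given $j$, exactly which pairs $(k,t)$ contribute — and this is controlled entirely by the disjointness of the $q$-cyclotomic cosets and the minimality of each $l_k$; everything else reduces to the orthogonality relations for characters of the cyclic group $\langle\beta\rangle$ of order $n$.
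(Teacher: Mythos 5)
Your proof is correct, and it is worth noting that the paper itself gives no proof of this proposition at all: it simply invokes Delsarte's theorem \cite{D75} and states the trace representation. Your argument is therefore a genuinely different (and more self-contained) route: you verify membership of $c(a_1,\dots,a_s)$ in $\mathcal{C}$ by the spectral criterion $\sum_{l}c_l\beta^{lj}=0$ for $j\notin\bigcup_k C_{i_k}$, which reduces via the orthogonality relation $\sum_{l=0}^{n-1}\beta^{lr}=n\cdot[\,n\mid r\,]$ to the disjointness of the cyclotomic cosets; you get injectivity by evaluating the same identity at $j=i_{k_0}$, where the minimality of $l_{k_0}$ isolates the single term $n\,a_{k_0}$; and the dimension count $\dim\mathcal{C}=\deg h=\sum_k l_k$ closes the argument. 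All the steps check out: separability of $x^n-1$ follows from $\gcd(n,q)=1$, the trace is well defined because $\beta^{-li_k}=(\beta^{i_k})^{-l}\in\mathbb{F}_{q^{l_k}}$, and $n\neq 0$ in $\mathbb{F}_q$. What the citation buys the paper is brevity; what your computation buys is an explicit inverse map (the discrete Fourier coefficient $a_{k_0}=n^{-1}\sum_l c_l\beta^{li_{k_0}}$), which makes the bijection completely transparent. The only cosmetic point is that the hypothesis ``$n\mid(q^n-1)$'' in the statement is a typo for $n\mid(q^m-1)$, which is the reading you (correctly) use.
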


\subsection{Quadratic forms}
This subsection introduces some results on
quadratic forms \cite{LN97}.
Let $q$ be an odd prime power and
$V$ be an $n$-dimensional vector space over
$\mathbb{F}_q$.
A quadratic form $Q$ on $V$   is a function from $V$ to $\mathbb{F}_q$ satisfying
$Q(\lambda x)=\lambda^2Q(x)$, where
$x\in V$ and $\lambda\in \mathbb{F}_q$.
A symmetric bilinear form $B_Q(x,y)
=\frac{1}{2}(Q(x+y)-Q(x)-Q(y))$
on $V$ is associated with $Q$. The radical of $Q$ is
$Rad(Q)=Q^{-1}(0)\cap Rad(B_Q)$, where
$Rad(B_Q)=\{y\in V: B(x,y)=0,
\forall x\in V\}$
is the radical of the symmetric bilinear form
$B_Q$. The radical of $Q$ is a vector space over $\mathbb{F}_q$. The rank of
$Q$ is $rank(Q)=n-dim(Rad(Q))$ and
the rank of $B_Q$ is $rank(B_Q)=n-dim(Rad(B_Q))$.
\begin{lemma}[Lemma 3.6, \cite{L17}]\label{rankQB}
Let $q$ be an odd prime power, $Q$ be a quadratic form on the $n$-dimensional vector space $V$ over $\mathbb{F}_q$ and
$B_Q$ be its associated bilinear form.
Then $rank(Q)=rank(B_Q)$.
\end{lemma}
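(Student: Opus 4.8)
The plan is to prove the stronger statement that $Rad(Q) = Rad(B_Q)$ as subspaces of $V$; the equality of ranks then follows at once from $rank(Q) = n - dim(Rad(Q))$ and $rank(B_Q) = n - dim(Rad(B_Q))$.

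First I would record the elementary identity that, since $q$ is odd, $Q$ is recovered from its polarization by $Q(x) = B_Q(x,x)$ for all $x \in V$. Indeed, combining $B_Q(x,y) = \frac{1}{2}(Q(x+y) - Q(x) - Q(y))$ with the homogeneity $Q(2x) = 4Q(x)$ (the case $\lambda = 2$ of $Q(\lambda x) = \lambda^2 Q(x)$) gives $B_Q(x,x) = \frac{1}{2}(Q(2x) - 2Q(x)) = \frac{1}{2}(4Q(x) - 2Q(x)) = Q(x)$; here it is crucial that $2$ is invertible in $\mathbb{F}_q$.

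Next, the inclusion $Rad(Q) \subseteq Rad(B_Q)$ is immediate from the definition $Rad(Q) = Q^{-1}(0) \cap Rad(B_Q)$. For the reverse inclusion, take $y \in Rad(B_Q)$, so $B_Q(x,y) = 0$ for every $x \in V$; setting $x = y$ and using the identity above yields $Q(y) = B_Q(y,y) = 0$, hence $y \in Q^{-1}(0)$ and therefore $y \in Q^{-1}(0) \cap Rad(B_Q) = Rad(Q)$. Thus $Rad(B_Q) \subseteq Rad(Q)$, and the two inclusions give $Rad(Q) = Rad(B_Q)$, completing the proof.

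There is no genuine obstacle: the whole content is the identity $Q(x) = B_Q(x,x)$, and this is precisely where the hypothesis that $q$ is odd enters. The statement is false in characteristic $2$ — there $B_Q$ may vanish on a nonzero subspace on which $Q$ does not, so that $rank(Q)$ and $rank(B_Q)$ genuinely differ — which is exactly why the oddness of $q$ must be invoked at the one step above.
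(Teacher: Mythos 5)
Your proof is correct and is essentially the standard argument: the paper itself states this lemma without proof (citing Lemma 3.6 of \cite{L17}), and the argument there is the same one you give, namely that since $2$ is invertible one has $Q(x)=B_Q(x,x)$, whence $Rad(B_Q)\subseteq Q^{-1}(0)$ and so $Rad(Q)=Q^{-1}(0)\cap Rad(B_Q)=Rad(B_Q)$, giving equality of ranks. Your remark about the failure in characteristic $2$ correctly identifies where the hypothesis that $q$ is odd is used.
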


Let $V=\mathbb{F}_q^n$ and $Q$ be a quadratic form on $V$. Then $Q$  has the following expression $Q(x)=\sum_{1\leq i,j\leq n}
c_{ij}x_ix_j$. Two quadratic forms
$Q$ and $Q'$ are equivalent if there is an
$n\times n$ nonsingular matrix $A$ such that
$Q(x_1,\ldots,x_n)=Q'((x_1,\ldots,x_n)A)$.
Every quadratic form $Q$ is equivalent to
$\sum_{i=1}^ra_ix_i^2$, where
$a_i\in \mathbb{F}_q^*$ and $r=rank(Q)$.
The type of
$Q$ is $\tau=\eta(\prod_{i=1}^ra_i)$, where
$\eta$ is the quadratic character of
$\mathbb{F}_q$.
\begin{lemma}[Lemma 5.1, \cite{S15}]\label{NQb}
Let $q$ be an odd prime power and
$Q$ be quadratic form of rank $r$ and type
$\tau$. Let $N(b)$ be the number of solutions
$Q(x)=b$, where $h\in \mathbb{F}_q$. Then
$$
N(b)=\left\{
       \begin{array}{ll}
         q^{n-1}+\tau \eta(-1)^{\frac{r-1}{2}}
\eta(b)q^{n-\frac{r+1}{2}}, & \hbox{if $r$ is odd;} \\
         q^{n-1}+\tau \eta(-1)^{\frac{r}{2}}
v(b)q^{n-\frac{r+2}{2}}, & \hbox{if $r$ is even.}
       \end{array}
     \right.
$$
where $\eta(0)=0$, $v(x)=-1$ for $x\in \mathbb{F}_q^*$ and
$v(0)=q-1$.
\end{lemma}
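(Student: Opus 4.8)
The idea is to reduce the count to a nondegenerate diagonal form and then evaluate the resulting character sum by means of the quadratic Gauss sum. A nonsingular linear substitution is a bijection of $V$, so equivalent quadratic forms have the same number of solutions of $Q(x)=b$; hence, by the diagonal normal form recalled just above, we may assume $Q(x_1,\dots,x_n)=\sum_{i=1}^{r}a_ix_i^{2}$ with $a_i\in\mathbb{F}_q^{*}$ and $\tau=\eta(a_1a_2\cdots a_r)$, the $n-r$ remaining coordinates not occurring in $Q$. Consequently $N(b)=q^{\,n-r}N_r(b)$ with $N_r(b)=\#\{(x_1,\dots,x_r)\in\mathbb{F}_q^{r}:\sum_{i=1}^{r}a_ix_i^{2}=b\}$, so it suffices to prove the two formulas with $n$ replaced by $r$ and then multiply back by $q^{\,n-r}$.

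Next I would fix a nontrivial additive character $\psi$ of $\mathbb{F}_q$ and, by orthogonality of characters, write
$$
N_r(b)=\frac{1}{q}\sum_{t\in\mathbb{F}_q}\psi(-tb)\prod_{i=1}^{r}\left(\sum_{x\in\mathbb{F}_q}\psi\!\left(t a_i x^{2}\right)\right).
$$
The summand $t=0$ contributes $q^{\,r-1}$. For $t\neq0$ one invokes the standard evaluation $\sum_{x\in\mathbb{F}_q}\psi(cx^{2})=\eta(c)\,G$ for $c\in\mathbb{F}_q^{*}$, where $G=\sum_{x\in\mathbb{F}_q^{*}}\eta(x)\psi(x)$ is the quadratic Gauss sum, which satisfies $G^{2}=\eta(-1)q$. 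Thus $\prod_{i=1}^{r}\eta(ta_i)G=\eta(t)^{r}\tau G^{r}$, and the part of the sum with $t\neq0$ becomes $\dfrac{\tau G^{r}}{q}\sum_{t\in\mathbb{F}_q^{*}}\eta(t)^{r}\psi(-tb)$.

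Finally I would split according to the parity of $r$. If $r$ is even, then $\eta(t)^{r}=1$ for $t\neq0$, so $\sum_{t\in\mathbb{F}_q^{*}}\psi(-tb)$ equals $q-1$ for $b=0$ and $-1$ otherwise, i.e. it equals $v(b)$; since $G^{r}=(\eta(-1)q)^{r/2}$, this gives $N_r(b)=q^{\,r-1}+\tau\,\eta(-1)^{r/2}v(b)\,q^{\,r/2-1}$. If $r$ is odd, then $\eta(t)^{r}=\eta(t)$; the sum vanishes for $b=0$ (because $\eta$ is nontrivial, consistent with the convention $\eta(0)=0$), while for $b\neq0$ the substitution $u=-tb$ yields $\sum_{t\in\mathbb{F}_q^{*}}\eta(t)\psi(-tb)=\eta(-1)\eta(b)G$; since $G^{r+1}=(\eta(-1)q)^{(r+1)/2}$ and every even power of $\eta(-1)$ equals $1$, this gives $N_r(b)=q^{\,r-1}+\tau\,\eta(-1)^{(r-1)/2}\eta(b)\,q^{\,(r-1)/2}$. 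Multiplying both expressions by $q^{\,n-r}$ reproduces exactly the two branches of the claimed formula. The only delicate point is the careful accounting of the powers of $\eta(-1)$ through $G^{2}=\eta(-1)q$ (and the collapse of even powers of $\eta(-1)$ to $1$); everything else is a routine Gauss-sum manipulation. As an alternative, one can prove the two formulas by induction on $r$ starting from the one- and two-variable cases, which avoids Gauss sums at the cost of the same sign bookkeeping.
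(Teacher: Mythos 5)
Your proof is correct. Note that the paper itself gives no argument for this lemma at all: it is quoted verbatim as Lemma 5.1 of \cite{S15} (and is in any case the classical count of Lidl--Niederreiter, Theorems 6.26--6.27), so there is nothing to compare against. Your reduction to the diagonal form $\sum_{i=1}^r a_i x_i^2$, the orthogonality expansion, the evaluation $\sum_x\psi(cx^2)=\eta(c)G$ with $G^2=\eta(-1)q$, and the final sign bookkeeping (in particular $\eta(-1)^{(r+3)/2}=\eta(-1)^{(r-1)/2}$ in the odd case, and the identification of $\sum_{t\neq 0}\psi(-tb)$ with $v(b)$ in the even case) all check out and reproduce both branches exactly.
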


Let $V=\mathbb{F}_{q^n}$.  Choose a basis
$\alpha_1,\ldots,\alpha_n$ of $V$ over
$\mathbb{F}_q$. We have a bijection between
$\mathbb{F}_{q^n}$ and $\mathbb{F}_q^n$. Note that the rank $r$ and the type $\tau$ of a quadratic form $Q$ on $V$ are independent of the choice of the basis.

\subsection{Association schemes}
This subsection introduces some results on association schemes
\cite{MS77,HW93,WWMM03}.
Let $X$ be a finite set and
$R_0,R_1,\ldots, R_n$ be a partition of
$X\times X$. Then a pair $\left(X,
(R_i)\right)$ is called an association scheme
with $n$ classes if it satisfies
\begin{itemize}
\item $R_0=\{(x,x): x\in X\}$;
\item for each $i$, there exists $j$ such that the inverse of $R_i$ equals $R_j$;
\item if $(x,y)\in R_k$, the number of the set
$\{z\in X: (x,z)\in R_i,
(z,y)\in R_j\}$ is a constant $p_{ij}^k$ depending on only $i$, $j$, and $k$, but not on the particular choice of $x$ and $y$.
\end{itemize}
An association scheme $\left(X,(R_i)\right)$ is symmetric if for all $i$ the inverse of $R_i$ equals $R_i$. It is commutative if
for all $i$, $j$, and $k$, $p_{ij}^k
=p_{ji}^k$.

Let $\left(X,(R_i)\right)$ be a commutative
association scheme with $n$ classes and
$A_i$ be the adjacency matrix of the diagraph
$(X,R_i)$. The matrices $A_0,A_1,\ldots,
A_n$ span a vector space  over the complex numbers called the Bose-Mesner algebra of
$\left(X,(R_i)\right)$ with dimension $n+1$.
This vector space   has another uniquely defined basis consisting of minimal idempotent
matrices $E_0,E_1,\ldots, E_n$. Then
$$
A_i=\sum_{k=0}^nP_i(k)E_k
~\text{and}~ E_k=\frac{1}{|X|}
\sum_{i=0}^nQ_k(i)A_i.
$$
The uniquely defined numbers $P_i(k)$ and
$Q_k(i)$ are called the $P$-numbers and
the $Q$-numbers of $\left(X,(R_i)\right)$,
respectively.

Let  $V$ be a vector space over the finite field
$\mathbb{F}_q$ of dimension $m$ and
$X(m,q)$ be the set of symmetric bilinear forms on $V$, where
$q$ is an odd prime power.
Let $\alpha_1,\ldots, \alpha_m$ be a basis of $V$. Then a symmetric bilinear form $B\in
X(m,q)$ has the $m\times m$
 symmetric matrix
$$
\left(B(\alpha_i,\alpha_j) \right)_{1\leq
i,j\leq m}.
$$
The rank of $B$ is the rank of this matrix, which is independent of the choice of the basis. This matrix is congruent to
a diagonal matrix, whose diagonal is either
zero of $[z,1,\ldots,1,0,\ldots,0]$ for
some nonzero $z\in \mathbb{F}_q$. The type of
$B$ is $\eta(z)$, where $\eta$ is the quadratic character of $\mathbb{F}_q$.

Let $X_{r,\tau}$ be the set of all the symmetric bilinear forms with rank $r$ and
type $\tau$. Define
$$
R_{r,\tau}=
\{(A,B)\in X(m,q)\times X(m,q):
A-B\in X_{r,\tau}\}.
$$
Then $\left( X(m,q), (R_{r,\tau}) \right)$ is an association scheme with $2m$ classes.

Let $Y$ be a subset of $X(m,q)$ and
the inner distribution of $Y$ be the sequence of numbers  $(a_{r,\tau})$, where
$$
a_{r,\tau}=\frac{|(Y\times Y)\cap R_{r,\tau}|}{|Y|}.
$$
Let $Q_{k,\epsilon}(r,\tau)$ be the
$Q$-numbers of $\left(X(m,q),
(R_{r,\tau}) \right)$. The dual inner distribution of $Y$ is the sequence of
$(a_{k,\epsilon}')$, where
$$
a_{k,\epsilon}'=\sum_{r,\tau}
Q_{k,\epsilon}(r,\tau)a_{r,\tau}.
$$
Note that $\tau\in \{1,-1\}$. When
$r=0$, we write $a_0$.
\begin{definition}
Let $Y$ be a subset of $X(m,q)$. The set $Y$ is a $d$-code if
$a_{i,1}=a_{i,-1}=0$ for each
$i\in \{1,2,\ldots,d-1\}$. The set
$Y$ is a proper $d$-code if
it is a $d$-code and it is not a $(d+1)$-code.
The set $Y$ is a $t$-design if
$a_{i,1}'=a_{i,-1}'=0$ for each
$i\in \{1,2,\ldots,t\}$.
The set $Y$ is a $(2t+1,\epsilon)$-design if
it is a $(2t+1)$-design and
$a_{2t+2,\epsilon}'=0$.
\end{definition}
Note that the designs involved in  associate schemes are not the usual t-designs
studied in combinatorial design theory.
Define $q^2$-analogs of binomial coefficients
$$
\begin{bmatrix}
n\\
k
\end{bmatrix}
=\prod_{i=1}^{k}\frac{q^{2n-2i+2}-1}
{q^{2i}-1}
$$
for integers $n$ and $k\geq 0$. Note that
$\begin{bmatrix}
n\\
0
\end{bmatrix}=1
$.  The inner distribution of $Y$ can be given by $q^2$-analogs of binomial coefficients in the following theorem and proposition.
\begin{theorem}[Theorem 3.9, \cite{S15}]\label{tart}
If $Y$ is a $(2\delta-1)$-code and
a $(2n-2\delta+3)$-design in $X(2n+1,q)$, then the inner distribution $(a_{r,\tau})$ of
$Y$ satisfies
\begin{align*}
a_{2i-1,\tau}&=\frac{1}{2}
\begin{bmatrix}
n\\
i-1
\end{bmatrix}
\sum_{j=0}^{i-\delta}(-1)^jq^{j(j-1)}
\begin{bmatrix}
i\\
j
\end{bmatrix}
(\frac{|Y|}{q^{(2n+1)(n+1+j-i)}}-1)\\
a_{2i,\tau}&=\frac{1}{2}(q^{2i}+
\tau\eta(-1)^iq^i)
\begin{bmatrix}
n\\
i
\end{bmatrix}
\sum_{j=0}^{i-\delta}(-1)^jq^{j(j-1)}
\begin{bmatrix}
i\\
j
\end{bmatrix}
(\frac{|Y|}{q^{(2n+1)(n+1+j-i)}}-1)
\end{align*}
for $i>0$.
If $Y$ is a $(2\delta-1)$-code and
a $(2n-2\delta+2)$-design in $X(2n,q)$, then the inner distribution $(a_{r,\tau})$ of
$Y$ satisfies
\begin{align*}
a_{2i-1,\tau}=&\frac{1}{2}(q^{2i}-1)
\begin{bmatrix}
n\\
i
\end{bmatrix}
\sum_{j=0}^{i-\delta}(-1)^jq^{j(j-1)}
\begin{bmatrix}
i-1\\
j
\end{bmatrix}
\frac{|Y|q^{2j}}{q^{(2n+1)(n+1+j-i)}}\\
a_{2i,\tau}=&\frac{1}{2}
\begin{bmatrix}
n\\
i
\end{bmatrix}
\sum_{j=0}^{i-\delta+1}(-1)^jq^{j(j-1)}
\begin{bmatrix}
i\\
j
\end{bmatrix}
(\frac{|Y|q^{2j}}{q^{(2n+1)(n+j-i)}}-1)\\
&+\frac{\tau}{2}\eta(-1)^iq^i
\begin{bmatrix}
n\\
i
\end{bmatrix}
\sum_{j=0}^{i-\delta}(-1)^jq^{j(j-1)}
\begin{bmatrix}
i\\
j
\end{bmatrix}
(\frac{|Y|}{q^{(2n-1)(n+j-i)}q^{2n}}-1)
\end{align*}
for $i>0$.
\end{theorem}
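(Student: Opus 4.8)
\emph{Proof sketch.} The plan is to run Delsarte's linear-programming machinery inside the association scheme $\bigl(X(m,q),(R_{r,\tau})\bigr)$ and to use that a subset which is at once a code of large minimum distance and a design of high strength lies in the \emph{tight} regime, where its inner distribution is the unique solution of an explicitly invertible linear system, automatically linear in $|Y|$. Write $a_0=1$, let $(a_{r,\tau})$ be the inner distribution and $(a'_{k,\epsilon})$ the dual inner distribution of $Y$. Besides $a'_{k,\epsilon}=\sum_{r,\tau}Q_{k,\epsilon}(r,\tau)\,a_{r,\tau}$ one has the inverse relation
\begin{equation*}
a_{r,\tau}=\frac{1}{|X(m,q)|}\sum_{k,\epsilon}P_{r,\tau}(k,\epsilon)\,a'_{k,\epsilon},
\end{equation*}
which follows from the orthogonality $\sum_{k,\epsilon}P_{r,\tau}(k,\epsilon)\,Q_{k,\epsilon}(r',\tau')=|X(m,q)|\,\delta_{(r,\tau),(r',\tau')}$ of the $P$- and $Q$-numbers, together with $P_{r,\tau}(0)=v_{r,\tau}$ (the valency of $R_{r,\tau}$) and $a'_0=|Y|$. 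Since $Y$ is a $(2n-2\delta+3)$-design (resp.\ a $(2n-2\delta+2)$-design), $a'_{k,\epsilon}=0$ for $1\le k\le 2n-2\delta+3$ (resp.\ $\le 2n-2\delta+2$); substituting this into the inverse relation leaves, for each $(r,\tau)$,
\begin{equation*}
a_{r,\tau}=\frac{1}{|X(m,q)|}\Bigl(v_{r,\tau}\,|Y|+\sum_{k}\sum_{\epsilon}P_{r,\tau}(k,\epsilon)\,a'_{k,\epsilon}\Bigr),
\end{equation*}
the inner sum running only over the few surviving indices $k>2n-2\delta+3$ (resp.\ $k>2n-2\delta+2$). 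Now impose the code condition $a_{r,\tau}=0$ for $1\le r\le 2\delta-2$: this is exactly a square linear system in the surviving dual unknowns $a'_{k,\epsilon}$, with right-hand sides proportional to $|Y|$. Solving it and feeding the answer back into the last display expresses each $a_{r,\tau}$ with $r\ge 2\delta-1$ as an explicit affine-linear function of $|Y|$; the count of equations and unknowns matches in both the $X(2n+1,q)$ and the $X(2n,q)$ cases.

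To turn this into the displayed closed forms I would first record the $P$- and $Q$-numbers of $\bigl(X(m,q),(R_{r,\tau})\bigr)$. These can be quoted from \cite{WWMM03}, or produced by a Gauss-sum computation: evaluating $\sum_{B\in X_{r,\tau}}\psi(\langle A,B\rangle)$ for a nondegenerate symmetric pairing on $X(m,q)$ and a nontrivial additive character $\psi$ of $\mathbb{F}_q$, one collects the quadratic Gauss sums over $\mathbb{F}_q$ and obtains expressions built from the $q^2$-analogues $\begin{bmatrix}n\\k\end{bmatrix}$ times the type-dependent factor $\tau\,\eta(-1)^{i}q^{i}$ --- the same correction that already appears in Lemma~\ref{NQb} and in the statement. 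Structurally this reflects the folding that sends a symmetric bilinear form over $\mathbb{F}_q$ of rank $2i$ or $2i-1$ to a bilinear form over $\mathbb{F}_{q^2}$ of rank $i$: modulo the types, the scheme mimics the classical bilinear-forms scheme over $\mathbb{F}_{q^2}$, whose $P$- and $Q$-numbers are generalized Krawtchouk polynomials, so the formulas are exactly the symmetric refinement of the familiar rank-distribution formulas for MRD-type codes.

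With the eigenvalue formulas fixed, the remaining work is to invert the square system above in closed form. The crux is to show that the pertinent truncation of the $P$-number matrix --- rows indexed by the code constraints $1\le r\le 2\delta-2$, columns by the surviving dual indices --- is nonsingular; this follows from the product form of the $P$-numbers, which renders that block triangular in a suitable $q^2$-binomial basis. One then inverts it and, using a $q^2$-analogue of the Vandermonde--Chu convolution, collapses the resulting double sums into the single alternating sums $\sum_{j}(-1)^{j}q^{j(j-1)}\begin{bmatrix}i\\j\end{bmatrix}(\cdots)$ in the statement; separating the rank-$(2i-1)$ part from the rank-$2i$ part and carrying the factor $\tau\,\eta(-1)^{i}q^{i}$ through the algebra produces the two displayed formulas for $a_{2i-1,\tau}$ and $a_{2i,\tau}$, and their analogues for $X(2n,q)$. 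As a check one can take $i=\delta$, where the inner sum has only its $j=0$ term, and sum over all $(r,\tau)$ to recover $\sum_{r,\tau}a_{r,\tau}=|Y|$.

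The main obstacle is exactly this last step: establishing nonsingularity of the truncated $P$-matrix and performing the closed-form inversion. Everything before it is routine once the eigenvalue formulas are in hand, but converting the matrix inverse into the compact alternating-sum expressions --- and, in particular, making every type-dependent term land with the correct sign $\eta(-1)^{i}$ and the correct power of $q$ --- is the delicate $q$-series bookkeeping on which the theorem rests.
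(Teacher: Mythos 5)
First, a point of reference: the paper does not prove this statement at all --- it is imported verbatim as Theorem 3.9 of \cite{S15} --- so there is no internal proof to compare against. Judged on its own terms, your proposal correctly identifies the framework in which Schmidt's proof lives: Delsarte duality in the scheme $\bigl(X(m,q),(R_{r,\tau})\bigr)$, the vanishing of $a_{r,\tau}$ for $1\le r\le 2\delta-2$ forced by the code condition and of $a'_{k,\epsilon}$ up to the design strength, and the observation that together these leave a square linear system whose solution is affine-linear in $|Y|$. Your count of constraints versus surviving unknowns is correct in both the $X(2n+1,q)$ and $X(2n,q)$ cases, and the inverse transform you write down is the right one.

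Nevertheless, as a proof of the displayed formulas the proposal has a genuine gap, which you in effect concede in your final paragraph: everything that makes the theorem nontrivial is deferred. You never actually record the $P$- or $Q$-numbers of the scheme (the Gauss-sum computation you allude to is itself a substantial piece of \cite{S15} and \cite{WWMM03}, and the splitting of each rank class by the type $\tau$ makes it more delicate than the plain bilinear-forms scheme over $\mathbb{F}_{q^2}$ that you invoke as a heuristic); the nonsingularity of the truncated $P$-matrix rests on an unproved ``block triangular in a suitable basis'' claim; and the collapse of the resulting double sums into the single alternating sums $\sum_j(-1)^jq^{j(j-1)}\begin{bmatrix}i\\j\end{bmatrix}(\cdots)$ --- including the presence or absence of the factor $q^{2j}$, the shift between $n+j-i$ and $n+1+j-i$ in the exponents, and the placement of $\tau\eta(-1)^iq^i$ across the four different formulas --- is precisely the content of the theorem and is not carried out. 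The $i=\delta$ consistency check you propose would not detect errors in these terms, since it kills all but the $j=0$ summand. What you have is a correct plan of attack, not a proof; to complete it you would either have to execute the $q$-series inversion in full or simply cite \cite{S15}, as the paper does.
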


\begin{proposition}[Proposition 3.10, \cite{S15}]\label{part}
If $Y$ is a $(2\delta)$-code and
a $(2n-2\delta+1)$-design in $X(2n,q)$, then the inner distribution $(a_{r,\tau})$ of
$Y$ satisfies
\begin{align*}
a_{2i-1,\tau}=&\frac{1}{2}(q^{2i}-1)
\begin{bmatrix}
n\\
i
\end{bmatrix}
\sum_{j=0}^{i-\delta-1}(-1)^jq^{j(j-1)}
\begin{bmatrix}
i-1\\
j
\end{bmatrix}
\frac{|Y|q^{2j}}{q^{(2n+1)(n+1+j-i)}}\\
a_{2i,\tau}=&\frac{1}{2}
\begin{bmatrix}
n\\
i
\end{bmatrix}
\sum_{j=0}^{i-\delta}(-1)^jq^{j(j-1)}
\begin{bmatrix}
i\\
j
\end{bmatrix}
(\frac{|Y|q^{2j}}{q^{(2n+1)(n+j-i)}}-1)\\
&+ \frac{\tau}{2}\eta(-1)^iq^i
\begin{bmatrix}
n\\
i
\end{bmatrix}
\sum_{j=0}^{i-\delta}(-1)^jq^{j(j-1)}
\begin{bmatrix}
i\\
j
\end{bmatrix}
(\frac{|Y|}{q^{(2n-1)(n+j-i)}q^{2n}}-1)
\end{align*}
for $i>0$.
If $Y$ is a $(2\delta)$-code and
a $(2n-2\delta+1,\eta(-1)^{n-\delta+1})$-design in $X(2n+1,q)$, then the inner distribution $(a_{r,\tau})$ of
$Y$ satisfies
\begin{align*}
a_{2i-1,\tau}=&\frac{1}{2}
\begin{bmatrix}
n\\
i-1
\end{bmatrix}
\sum_{j=0}^{i-\delta}(-1)^jq^{j(j-1)}
\begin{bmatrix}
i\\
j
\end{bmatrix}
(\frac{|Y|}{q^{(2n+1)(n+1+j-i)}}-1)\\
&+ \frac{1}{2}(-1)^{i-\delta}q^{(i-\delta)
(i-\delta-1)}
\begin{bmatrix}
n\\
\delta-1
\end{bmatrix} (\frac{|Y|}{q^{(2n+1)(n-\delta+1)}}-1)
\left(
\begin{bmatrix}
n-\delta\\
n-i+1
\end{bmatrix}
(q^{n-\delta+1}+1)-
\begin{bmatrix}
n-\delta+1\\
n-i+1
\end{bmatrix}
\right)\\
a_{2i,\tau}=&\frac{1}{2}(q^{2i}+
\tau\eta(-1)^{i}q^i)
\begin{bmatrix}
n\\
i
\end{bmatrix}
\sum_{j=0}^{i-\delta}(-1)^jq^{j(j-1)}
\begin{bmatrix}
i\\
j
\end{bmatrix}
(\frac{|Y|}{q^{(2n+1)(n+1+j-i)}}-1)\\
&+\frac{1}{2}\eta(-1)^{i-\delta}q^{(i-\delta+1)
(i-\delta)}
\begin{bmatrix}
n\\
\delta-1
\end{bmatrix}
\begin{bmatrix}
n-\delta\\
n-i
\end{bmatrix}
(q^{n-\delta+1}+1)
(\frac{|Y|}{q^{(2n+1)(n-\delta+1)}}-1)
\end{align*}
for $i>0$.
\end{proposition}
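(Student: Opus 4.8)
The plan is to establish both assertions as the ``even-code'' counterpart of Theorem \ref{tart}, by the same Delsarte-duality argument inside the association scheme $(X(m,q),(R_{r,\tau}))$, treating the cases $m=2n$ and $m=2n+1$ separately. The first step is to convert the hypotheses into vanishing conditions on the inner and dual inner distributions. Since $Y$ is a $(2\delta)$-code, $a_{r,\tau}=0$ for $1\le r\le 2\delta-1$, so the only unknowns are $a_0=1$ together with $a_{r,1}$ and $a_{r,-1}$ for $2\delta\le r\le m$. Since $Y$ is a $(2n-2\delta+1)$-design we get $a_{k,\epsilon}'=0$ for $1\le k\le 2n-2\delta+1$, and in the $X(2n+1,q)$ case we also get the single ``half'' condition $a_{2n-2\delta+2,\epsilon}'=0$ coming from the $(2n-2\delta+1,\eta(-1)^{n-\delta+1})$-design property. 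Writing $a_{k,\epsilon}'=\sum_{r,\tau}Q_{k,\epsilon}(r,\tau)\,a_{r,\tau}$ with the explicit $Q$-numbers of the symmetric-bilinear-forms scheme, these equations together with the normalisation $\sum_{r,\tau}a_{r,\tau}=|Y|$ form a linear system in the $a_{r,\tau}$.

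Next I would check the bookkeeping showing that this system is exactly determined: the number of free entries $a_{r,\tau}$ with $2\delta\le r\le m$ should match the number of independent constraints, and it is precisely the extra $\epsilon$-condition that makes the count work when $m$ is odd --- this is the structural reason one must assume a $(2n-2\delta+1,\epsilon)$-design in $X(2n+1,q)$ rather than merely a $(2n-2\delta+1)$-design. The core of the argument is then to invert this system in closed form, using the known expressions for the $P$-numbers and $Q$-numbers of $(X(m,q),(R_{r,\tau}))$ (which are built from $q$-Krawtchouk polynomials and Gaussian binomial coefficients), the orthogonality relations $\sum_k P_i(k)Q_k(j)=|X(m,q)|\,\delta_{ij}$, and the $q$-Vandermonde and $q$-Pascal identities for the $q^2$-binomials $\begin{bmatrix} n\\ k\end{bmatrix}$. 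Carrying this through produces the displayed formulas for $a_{2i-1,\tau}$ and $a_{2i,\tau}$: the summands carrying the factor $\eta(-1)^iq^i$ are exactly the type-dependent part of the $Q$-numbers, the additional summands in the $X(2n+1,q)$ case are the contribution of the single half-condition $a_{2n-2\delta+2,\epsilon}'=0$, and the upper summation limits $i-\delta$, $i-\delta+1$, $i-\delta-1$ arise from the ranges where the relevant $q^2$-binomial coefficients are nonzero.

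I expect the main obstacle to be this explicit inversion: matching the generic solution of the linear system to the particular closed forms above requires nontrivial manipulation of alternating sums of products of Gaussian binomials, repeated application of the $q$-Vandermonde identity, and careful tracking of the sign $\eta(-1)^i$ and of the parity of $m$; pinning down the summation ranges and the ``boundary'' terms involving $\begin{bmatrix} n\\ \delta-1\end{bmatrix}$ in the odd case is the most delicate point. Since the statement is attributed to \cite{S15}, in the write-up I would quote the needed $Q$-numbers and binomial identities from there and only indicate the reduction; conceptually everything is contained in the duality set-up of the first paragraph, and all that follows is computation.
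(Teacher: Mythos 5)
First, a point of comparison: the paper does not prove this statement at all. Proposition~\ref{part} is quoted verbatim from Schmidt \cite{S15} (Proposition~3.10 there), and that citation is the paper's entire justification. So there is no internal proof to measure your attempt against; what you are offering is a reconstruction of Schmidt's argument.

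Judged as such, your text has a genuine gap: it is a plan, not a proof. The set-up is right --- you correctly translate the hypotheses into $a_{r,\tau}=0$ for $1\le r\le 2\delta-1$ and $a'_{k,\epsilon}=0$ for $1\le k\le 2n-2\delta+1$, plus the single half-condition $a'_{2n-2\delta+2,\eta(-1)^{n-\delta+1}}=0$ in the $X(2n+1,q)$ case, and you correctly identify that the conclusion must come from solving the resulting linear system via the $Q$-numbers of the scheme. But every feature of the displayed formulas --- the summation limits $i-\delta-1$, $i-\delta$, $i-\delta+1$, the factors $q^{j(j-1)}$, the split into a type-independent and a $\tau\eta(-1)^iq^i$ part, and especially the boundary terms involving $\begin{bmatrix} n\\ \delta-1\end{bmatrix}$ in the odd case --- is precisely the output of the computation you defer, and nothing in the write-up derives any of it. The one structural claim you do make, that the system is ``exactly determined'' and that the extra $\epsilon$-condition balances the count for odd $m$, is asserted rather than checked; a naive tally of unknowns $a_{r,\tau}$ ($2\delta\le r\le m$, two types each) against design constraints plus normalisation does not come out square in the even case, so uniqueness and the closed forms really do require Schmidt's explicit eigenvalue identities rather than a generic inversion of a square matrix. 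Finally, your stated intention to ``quote the needed $Q$-numbers and binomial identities from \cite{S15} and only indicate the reduction'' collapses the proof back into the very citation the paper already makes: either cite \cite{S15} for the proposition itself, as the authors do, or carry the inversion through in full --- the intermediate position establishes nothing.
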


\section{A class of linear codes of 
length $n=\frac{q^m-1}{2}$}
In this section, let $q$ be an odd prime
power, $m$ be a positive integer, $\alpha$ be a generator of $\mathbb{F}_{q^m}^*$ and $\beta=\alpha^{2}$. Then $\beta$ is a primitive $n$-th root of unity
in $\mathbb{F}_{q^m}$,
where $n=\frac{q^m-1}{2}$.

Let $h$ be a positive integer.
When $m$ is odd, define the codes
$$
\mathcal{C}_{1,h}=\left\{
\left( \mathrm{Tr}_1^m\left (\sum_{j=h}^{\frac{m-1}{2}} a_j \alpha^{\left (q^{j}+1 \right )l} \right )+a \right )_{l=0}^{n-1}:
a_h,\ldots, a_{\frac{m-1}{2}}\in \mathbb{F}_{q^m}, a\in \mathbb{F}_{q} \right\}
$$
and
$$
\tilde{\mathcal{C}}_{1,h}=\left\{
\left( \mathrm{Tr}_1^m\left (\sum_{j=h}^{\frac{m-1}{2}} a_j \alpha^{\left (q^{j}+1 \right )l} \right ) \right )_{l=0}^{n-1}:
a_h,\ldots, a_{\frac{m-1}{2}}\in \mathbb{F}_{q^m}, a\in \mathbb{F}_{q} \right\}.
$$
When $m$ is even,  define the codes
$$
\mathcal{C}_{2,h}=
\left\{ \left( \mathrm{Tr}_1^m\left (a_{\frac{m}{2}}\alpha^{\left (q^{\frac{m}{2}}+1 \right )l}+\sum_{j=h}^{\frac{m-2}{2}} a_j \alpha^{\left (q^{j}+1 \right )l} \right )+a \right )_{l=0}^{n-1}:
a_h,\ldots, a_{\frac{m-2}{2}}\in \mathbb{F}_{q^m}, a_{\frac{m}{2}}\in \mathbb{F}_{q^{\frac{m}{2}}},
a\in \mathbb{F}_q
 \right\}
$$
and
$$
\tilde{\mathcal{C}}_{2,h}=
\left\{ \left( \mathrm{Tr}_1^m\left (a_{\frac{m}{2}}\alpha^{\left (q^{\frac{m}{2}}+1 \right )l}+\sum_{j=h}^{\frac{m-2}{2}} a_j \alpha^{\left (q^{j}+1 \right )l} \right ) \right )_{l=0}^{n-1}:
a_h,\ldots, a_{\frac{m-2}{2}}\in \mathbb{F}_{q^m}, a_{\frac{m}{2}}\in \mathbb{F}_{q^{\frac{m}{2}}},
a\in \mathbb{F}_q
 \right\}.
$$

Let $Q(x)$ be a quadratic form defined by
$$
Q(x)=\mathrm{Tr}_1^m\left(
\sum_{j=1}^la_jx^{q^{k_j}+1}
\right), 
$$
where $k_j\geq0$ and $a_j\in \mathbb{F}_{q^m}$. The symmetric bilinear form associated with
$Q(x)$ is
\begin{align*}
B_Q(x,y)&=\frac{1}{2}\left( Q(x+y)-Q(x)-Q(y)
\right)\\
&= \frac{1}{2} \sum_{j=0}^l
\mathrm{Tr}_1^m\left(
a_j(x^{q^{k_j}}+y^{q^{k_j}})(x+y)
-a_jx^{q^{k_j}+1}-a_jy^{q^{k_j}+1}
\right)\\
&=\frac{1}{2}\sum_{j=1}^{l}\mathrm{Tr}_1^m
(a_jx^{q^{k_j}}y+a_jxy^{q^{k_j}})\\
&=\frac{1}{2}\sum_{j=1}^{l}\mathrm{Tr}_1^m
(a_jx^{q^{k_j}}y)+
\frac{1}{2}\sum_{j=1}^{l}\mathrm{Tr}_1^m
(a_jxy^{q^{k_j}})\\
&=\sum_{j=1}^{l}\mathrm{Tr}_1^m
\left((\frac{a_j}{2})x^{q^{k_j}}y\right)+
\sum_{j=1}^{l}\mathrm{Tr}_1^m
\left((\frac{a_j}{2})^{q^{-k_j}}
x^{q^{-k_j}}y\right)\\
&=\sum_{j=0}^{l}
\mathrm{Tr}_1^m\left(
\left((\frac{a_j}{2})x^{q^{k_j}}+
(\frac{a_j}{2})^{q^{-k_j}}x^{q^{-k_j}}\right)y
\right).
\end{align*}

Let $h\geq 1$. Define the  set of quadratic forms:
\begin{align*}
Q_1&=\left\{
\mathrm{Tr}_1^m\left(\sum_{j=h}^{\frac{m-1}{2}}
 a_jx^{q^{j}+1}
\right): a_j\in\mathbb{F}_{q^m}
~\text{for}~ h\leq j\leq \frac{m-1}{2}
\right\},
\end{align*}
where $m$ is odd.  Define the set of quadratic forms:
\begin{align*}
Q_2&=\left\{
\mathrm{Tr}_1^m\left(
a_{\frac{m}{2}}x^{q^{\frac{m}{2}+1}}+
\sum_{j=h}^{\frac{m-2}{2}}a_jx^{q^{j}+1}
\right): a_{\frac{m}{2}} \in\mathbb{F}_{q^{
\frac{m}{2}}}
, a_j\in\mathbb{F}_{q^m}
~\text{for}~ h\leq j\leq \frac{m-2}{2}
\right\},
\end{align*}
where $m$ is even.
Then we have the following two sets of bilinear forms associated with $Q_1$ and $Q_2$ respectively.
\begin{align*}
S_1&=\left\{
\mathrm{Tr}_1^m\left(
\sum_{j=h}^{\frac{m-1}{2}}\left((\frac{a_j}{2})x^{q^{j}}+
(\frac{a_j}{2})^{q^{-j}}x^{q^{-j}}\right)y
\right):  a_j\in\mathbb{F}_{q^m}
~\text{for}~ h\leq j\leq \frac{m-1}{2}
\right\}\\
&=\left\{
\mathrm{Tr}_1^m\left(\sum_{j=h}^{\frac{m-1}{2}}
\left(a_jx^{q^{j}}+
a_j^{q^{-j}}x^{q^{-j}}\right)y
\right): a_j\in\mathbb{F}_{q^m}
~\text{for}~ h\leq j\leq \frac{m-1}{2}
\right\}
\end{align*}
and
\begin{align*}
S_2&=\left\{
\mathrm{Tr}_1^m\left(
\frac{a_{\frac{m}{2}}}{2}x^{\frac{m}{2}}+
\sum_{j=h}^{\frac{m-2}{2}}\left((\frac{a_j}{2})x^{q^{j}}+
(\frac{a_j}{2})^{q^{-j}}x^{q^{-j}}\right)y
\right): a_{\frac{m}{2}} \in\mathbb{F}_{q^{
\frac{m}{2}}}
, a_j\in\mathbb{F}_{q^m}
~\text{for}~ h\leq j\leq \frac{m-2}{2}
\right\}\\
&=\left\{
\mathrm{Tr}_1^m\left(
{a_{\frac{m}{2}}}x^{\frac{m}{2}}+
\sum_{j=h}^{\frac{m-2}{2}}
\left(a_jx^{q^{j}}+
a_j^{q^{-j}}x^{q^{-j}}\right)y
\right): a_{\frac{m}{2}} \in\mathbb{F}_{q^{
\frac{m}{2}}}
, a_j\in\mathbb{F}_{q^m}
~\text{for}~ h\leq j\leq \frac{m-2}{2}
\right\}.
\end{align*}
Then $|S_1|=|S_2|=q^{m(\frac{m+1}{2}-h)}$.
The inner distributions of $S_1$ and
$S_2$ are given in the following proposition.
\begin{proposition}\label{code-design}
Let $(a_{0},a_{1,1},a_{1,-1},\ldots, a_{m,1},a_{m,-1})$ be the inner distribution of
$S_i$. If $m$ is odd, then $S_1$ is a proper $(2h+1)$-code and
$(m+1-2h)$-design in $X(m,q)$, and
the inner distribution of
$S_1$ is given in
Theorem \ref{tart}.
If $m$ is even, then $S_2$ is a proper $(2h)$-code and
$(m+1-2h)$-design in $X(m,q)$, and
the inner distribution of
$S_2$ is given in
Proposition \ref{part}.
\end{proposition}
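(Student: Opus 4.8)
The plan is to translate the statement into facts about ranks of self-adjoint $q$-linearized polynomials, prove the ``code'' and ``design'' halves directly, and then feed the outcome into Theorem~\ref{tart} and Proposition~\ref{part}. First I would record the standard dictionary: $X(m,q)$ is identified with the $\mathbb{F}_q$-space of self-adjoint linearized polynomials $L(x)=\sum_{e=0}^{m-1}c_ex^{q^e}$ on $\mathbb{F}_{q^m}$ through $B(x,y)=\mathrm{Tr}_1^m(L(x)y)$, where self-adjointness means $c_{m-e}=c_e^{q^{-e}}$ for all $e$, and one has $\mathrm{rank}(B)=m-\dim_{\mathbb{F}_q}\ker L$, with $\ker L$ the set of roots of $L$ in $\mathbb{F}_{q^m}$. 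By the computation of $B_Q$ carried out above, $S_1$ (for $m$ odd) and $S_2$ (for $m$ even) are precisely the subspaces consisting of those $L$ whose nonzero exponents all lie in $\{h,h+1,\dots,m-h\}$.

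\emph{Code property.} Let $0\neq B$ lie in $S_1$ or $S_2$ with polynomial $L$, and let $j_1\ge h$ be its least exponent; self-adjointness forces the largest exponent of $L$ to be $m-j_1$. Writing $L(x)=P(x)^{q^{j_1}}$ with $P(x)=\sum_{i=0}^{D}p_ix^{q^i}$, $p_i=c_{i+j_1}^{q^{-j_1}}$ and $D=m-2j_1$, we get $p_0,p_D\neq0$, so $P$ is separable of $q$-degree $D\le m-2h$ and $\ker L=\ker P$; hence $\mathrm{rank}(B)\ge m-D\ge2h$, which already shows $S_2$ is a $(2h)$-code when $m$ is even. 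When $m$ is odd, $D$ is odd, and the point is to improve the bound to $\mathrm{rank}(B)\ge2h+1$, i.e.\ $\dim\ker P\le D-1$. If instead $\dim\ker P=D$, then $P$ splits over $\mathbb{F}_{q^m}$, $W:=\ker P$ is a $D$-dimensional $\mathbb{F}_q$-subspace, $P(x)=p_D\prod_{w\in W}(x-w)$, and comparing the coefficients of $x$ gives $p_0=p_D\prod_{w\in W\setminus\{0\}}w$. Grouping the nonzero vectors of $W$ into $\tfrac{q^{D}-1}{q-1}$ lines and using $\prod_{\lambda\in\mathbb{F}_q^{*}}\lambda=-1$ shows $\prod_{w\in W\setminus\{0\}}w=(-1)^{(q^{D}-1)/(q-1)}\mu^{\,q-1}=-\mu^{\,q-1}$ for some $\mu\in\mathbb{F}_{q^m}^{*}$, the sign being $-1$ exactly because $D$ is odd. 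Self-adjointness also gives $p_0=p_D^{\,q^{j_1}}$, so $p_D^{\,q^{j_1}-1}=-\mu^{\,q-1}$. Applying $N_{\mathbb{F}_{q^m}/\mathbb{F}_q}$ and using that $q^{j_1}$-th powers and $(q-1)$-th powers act trivially on $\mathbb{F}_q^{*}$ yields $1=N_{\mathbb{F}_{q^m}/\mathbb{F}_q}(-1)=(-1)^{(q^{m}-1)/(q-1)}=-1$ since $q$ and $m$ are odd, a contradiction. Hence $S_1$ is a $(2h+1)$-code.

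\emph{Design property.} I would invoke the Delsarte duality between codes and designs in the bilinear-forms scheme: a linear subspace $Y\subseteq X(m,q)$ is a $t$-design if and only if its dual $Y^{\perp}$ is a $(t+1)$-code (see \cite{D75,S15}). Under the dictionary above, $S_1^{\perp}$ and $S_2^{\perp}$ are exactly the self-adjoint linearized polynomials supported on exponents in $\{0,1,\dots,h-1\}\cup\{m-h+1,\dots,m-1\}$ (the complementary half of the exponent range; the dimension $mh$ checks out). For any nonzero such $L'$, the polynomial $(L')^{q^{h-1}}$ is supported on exponents in $\{0,1,\dots,2h-2\}$, hence has at most $q^{2h-2}$ roots in $\mathbb{F}_{q^m}$; so its form has rank at least $m-2h+2$. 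Thus $S_i^{\perp}$ is an $(m-2h+2)$-code and therefore $S_i$ is an $(m+1-2h)$-design.

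With both properties in hand, Theorem~\ref{tart} applies to $S_1$ (a $(2h+1)$-code and, a fortiori, an $(m-2h)$-design in $X(2n+1,q)$, $n=\tfrac{m-1}{2}$) with $\delta=h+1$, and Proposition~\ref{part} applies to $S_2$ (a $(2h)$-code and an $(m+1-2h)$-design in $X(2n,q)$, $n=\tfrac{m}{2}$) with $\delta=h$; this gives the stated inner distributions. Reading off these formulas at the critical rank yields $a_{2h+1,\tau}=\tfrac12\begin{bmatrix}n\\h\end{bmatrix}(q^{m}-1)>0$ in the odd case, and $a_{2h,\tau}=\tfrac12\begin{bmatrix}n\\h\end{bmatrix}(q^{h}-1)\bigl(1-\tau\eta(-1)^{h}\bigr)$ in the even case, which is positive for $\tau=-\eta(-1)^{h}$; hence rank $2h+1$ (resp.\ rank $2h$) really occurs, so $S_1$ (resp.\ $S_2$) is a proper $(2h+1)$-code (resp.\ $(2h)$-code). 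I expect the main obstacle to be the rank improvement from $2h$ to $2h+1$ in the odd case: the naive $q$-degree estimate only yields $2h$, and squeezing out the extra unit forces one to exploit the self-adjointness of $L$ via the product-of-roots/norm computation, where the oddness of $m$ enters decisively through $N_{\mathbb{F}_{q^m}/\mathbb{F}_q}(-1)=-1$ (for $m$ even the analogous computation is consistent, matching the fact that $S_2$ is only a $(2h)$-code). Pinning down the precise description of $S_i^{\perp}$ and matching the design parameters to the hypotheses of Theorem~\ref{tart} and Proposition~\ref{part} should be routine bookkeeping.
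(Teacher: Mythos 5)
Your proposal reaches the stated conclusions but by a genuinely different route in both halves, essentially swapping which half is done directly and which is done by machinery. For the code property, the paper only proves the crude bound $\mathrm{rank}(B)\geq 2h$ directly (via the $q$-degree of the associated linearized polynomial) and then obtains the improvement to $2h+1$ for odd $m$ \emph{indirectly}: it first establishes the design property, applies Proposition~\ref{part} with $\delta=h$, and reads off $a_{2h,\tau}=0$. You instead prove $\mathrm{rank}(B)\geq 2h+1$ directly: a full kernel would force $P$ to be a scalar multiple of a subspace polynomial, whence $p_D^{q^{j_1}-1}=-\mu^{q-1}$ by comparing coefficients of $x$, and taking norms gives $N_{\mathbb{F}_{q^m}/\mathbb{F}_q}(-1)=1$, contradicting $m$ odd; this is self-contained, independent of the design property, and exposes exactly where the parity of $m$ enters (your check is correct, including the sign $(-1)^{(q^D-1)/(q-1)}=-1$ for $D$ odd). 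For the design property the roles reverse: the paper argues combinatorially by restricting the forms $B_a$ to $t$-dimensional subspaces and invoking Lemma~4.6 of \cite{S15}, Lemma~4.3 of \cite{L17} and the characterization of designs via restrictions (Theorem~3.11 of \cite{S15}), whereas you invoke annihilator duality ($Y$ a $t$-design iff $Y^{\perp}$ a $(t+1)$-code) plus a rank bound on $S_i^{\perp}$. Your route is shorter, but this is the one step needing care: \cite{D75} is not the right reference (it concerns subfield subcodes), the duality rests on the formal self-duality of the symmetric bilinear forms scheme for odd $q$ and should be cited from \cite{S15}, and the identification of $S_i^{\perp}$ with the self-adjoint polynomials supported on the complementary exponent range, while dimensionally consistent, needs to be checked against the trace pairing. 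The final bookkeeping (Theorem~\ref{tart} with $\delta=h+1$, Proposition~\ref{part} with $\delta=h$, and the positivity of $a_{2h+1,\tau}$, resp.\ of $a_{2h,\tau}$ for $\tau=-\eta(-1)^{h}$, for properness) matches the paper's.
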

\begin{proof}
We first prove that $S_1$ is a $(m+1-2h)$-design in $X(m,q)$, where $m$ is odd.

Let $U$ be a $t$-dimensional subspace of $\mathbb{F}_{q^m}$ and $A$ be a symmetric
bilinear form on $U$, where $t=m+1-2h$.
For $a=
(a_0,a_1,\ldots,m_{t-1})\in \mathbb{F}_{q^m}^t$, define a bilinear
form $B_a$ on $\mathbb{F}_{q^m}^t$
$$
B_a(x,y)=\mathrm{Tr}_1^m\left(\sum_{j=0}^{t-1}
a_jx^{q^{s(j+h})}y\right)
=\mathrm{Tr}_1^m\left(
\sum_{j=h}^{m-h}
a_{j-h}x^{q^{sj}}y\right).
$$
From Lemma 4.6 in \cite{S15}, the set
$\{B_a\mid_U: a \in \mathbb{F}_{q^m}^t\}$
is a multiset in which each
bilinear form on $U$ occurs a constant number  (depending only on $t$) of times.
From Lemma 4.3 in \cite{L17},
the number of elements in the multiset
$\left\{D+D': D\in \{B_a: a \in
\mathbb{F}_{q^m}^t\}\right\}$ that are an extension of $A$ is a constant independent of $U$ and $A$.

For $a=
(a_0,a_1,\ldots,m_{t-1})\in \mathbb{F}_{q^m}^t$, we have
\begin{align*}
B_a(x,y)+B_{a}'(x,y)
=& \mathrm{Tr}_1^m\left(
\sum_{j=h}^{m-h}
a_{j-h}(x^{q^{sj}}y+
xy^{q^{sj}})\right) \\
=&  \mathrm{Tr}_1^m\left(
\sum_{j=h}^{\frac{m-1}{2}}
a_{j-h}(x^{q^{sj}}y+
xy^{q^{sj}})\right)+\mathrm{Tr}_1^m\left(
\sum_{j=\frac{m+1}{2}}^{m-h}
a_{j-h}(x^{q^{sj}}y+
xy^{q^{sj}})\right)\\
=&
\mathrm{Tr}_1^m\left(
\sum_{j=h}^{\frac{m-1}{2}}
a_{j-h}(x^{q^{sj}}y+
xy^{q^{sj}})\right)+\mathrm{Tr}_1^m\left(
\sum_{j=h}^{\frac{m-1}{2}}
a_{m-j-h}(x^{q^{s(m-j)}}y+
xy^{q^{s(m-j)}})\right)\\
=&
\mathrm{Tr}_1^m\left(
\sum_{j=h}^{\frac{m-1}{2}}
a_{j-h}(x^{q^{sj}}y+
xy^{q^{sj}})\right)+\mathrm{Tr}_1^m\left(
\sum_{j=h}^{\frac{m-1}{2}}
a_{m-j-h}^{sj}(x^{q^{sm}}y^{q^{sj}}+
x^{q^{sj}}y^{q^{sm}})\right)\\
=&
\mathrm{Tr}_1^m\left(
\sum_{j=h}^{\frac{m-1}{2}}
a_{j-h}(x^{q^{sj}}y+
xy^{q^{sj}})\right)+\mathrm{Tr}_1^m\left(
\sum_{j=h}^{\frac{m-1}{2}}
a_{m-j-h}^{sj}(xy^{q^{sj}}+
x^{q^{sj}}y)\right)\\
=&
\mathrm{Tr}_1^m\left(
\sum_{j=h}^{\frac{m-1}{2}}
(a_{j-h}+a_{m-j-h}^{sj})(x^{q^{sj}}y+
xy^{q^{sj}})\right).
\end{align*}
Then
$$
\left\{D+D': D\in \{B_a: a \in
\mathbb{F}_{q^m}^t\}\right\}
=\left\{\mathrm{Tr}_1^m\left(
\sum_{j=h}^{\frac{m-1}{2}}
(a_{j-h}+a_{m-j-h}^{sj})(x^{q^{sj}}y+
xy^{q^{sj}})\right): a \in
\mathbb{F}_{q^m}^t\right\}.
$$
Note that $\{a_{j-h}, a_{m-j-h}: h\leq j\leq \frac{m-1}{2}\}=\{a_j: 0\leq j \leq t-1\}$.
Then $a_{j-h}+a_{m-j-h}^{sj}$ ranges over
$\mathbb{F}_{q^m}$ for $q^m$ times when
$a_{j-h}$ and $a_{m-j-h}$ ranges over
$\mathbb{F}_{q^m}$. Hence, the number of
elements in $S_1$ which are an extension of
$A$ is a constant independent of $U$ and $A$.
From Theorem 3.11 in \cite{S15}, the set
$S_1$ is a $t$-design in
$X(m,q)$.

We then prove that
$S_1$ is a proper $(2h+1)$-code.  For a bilinear form
$$
B(x,y)=
\mathrm{Tr}_1^m\left(\sum_{j=h}^{\frac{m-1}{2}}
\left(a_jx^{q^{j}}+
a_j^{q^{-j}}x^{q^{-j}}\right)y
\right)\in S_1,
$$
 we have
\begin{align*}
Rad(B(x,y))=&
\left\{ x\in \mathbb{F}_{q^m} :
\mathrm{Tr}_1^m\left(\sum_{j=h}^{\frac{m-1}{2}}
\left(a_jx^{q^{j}}+
a_j^{q^{-j}}x^{q^{-j}}\right)y
\right)=0, \forall y \in \mathbb{F}_{q^m}
\right\}\\
=& \left\{ x\in \mathbb{F}_{q^m} :
\mathrm{Tr}_1^m\left(\sum_{j=h}^{\frac{m-1}{2}}
\left(a_jx^{q^{j}}+
a_j^{q^{m-j}}x^{q^{m-j}}\right)y
\right)=0, \forall y \in \mathbb{F}_{q^m}
\right\}\\
=& \left\{ x\in \mathbb{F}_{q^m} :
 \sum_{j=h}^{\frac{m-1}{2}}
\left(a_jx^{q^{j}}+
a_j^{q^{m-j}}x^{q^{m-j}}\right)y
 =0, \forall y \in \mathbb{F}_{q^m}
\right\}\\
=& \left\{ x\in \mathbb{F}_{q^m} :
 \sum_{j=h}^{\frac{m-1}{2}}
\left(a_j^{-h}x^{q^{j-h}}+
a_j^{q^{m-j-h}}x^{q^{m-j-h}}\right)y
 =0, \forall y \in \mathbb{F}_{q^m}
\right\}.
\end{align*}
Since the degree of the linearized polynomial
$\sum_{j=h}^{\frac{m-1}{2}}
\left(a_j^{-h}x^{q^{j-h}}+
a_j^{q^{m-j-h}}x^{q^{m-j-h}}\right)y$
over $\mathbb{F}_{q^m}$ is at most
$m-2h$, the dimension of the vector space  $Rad(B(x,y))$ is at most
$m-2h$. Hence, $rank(B(x,y))=m
-dim(Rad(B(x,y)))\geq 2h$. Hence,
$S_1$ is a $(2h)$-code in $X(m,q)$.
Since $S_1$ is a
$(2m+1-2h)$-design, then
$S_1$ is a $(2m-2h, \tau)$-design.
From Proposition \ref{part} with $\delta=h$, we have
$a_{2h,\tau}=0$ for $\tau=\{1,-1\}$. Hence,
$S_1$ is a $(2h+1)$-code. From
Theorem \ref{tart}, we have
$a_{2h+1,\tau}>0$. Hence, $S_1$ is a proper
$(2h+1)$-code.

From a similar discussion, we have the corresponding results for $S_2$ when $m$ is even.
\end{proof}

\begin{lemma}\label{qurt-a}
Let $q$ be odd and $Q$ be a quadratic form of
rank $r\geq 1$ and type $\tau$ on
$\mathbb{F}_{q^m}$. Then the weight enumerator
$W_{r,\tau}$ of $\left(Q(\alpha^l)+a \right)_{l=0}^{n-1}$   is
\begin{align*}
Z^{\frac{q^m-q^{m-1}}{2}}+\frac{q-1}{2}
Z^{\frac{1}{2}\left( q^m-q^{m-1}-\tau\eta(-1)^{\frac{r-1}{2}}
q^{m-\frac{r+1}{2}}-1 \right)}
+\frac{q-1}{2}Z^{\frac{1}{2}\left( q^m-q^{m-1}+\tau\eta(-1)^{\frac{r-1}{2}}
q^{m-\frac{r+1}{2}}-1 \right)}
\end{align*}
if $r$ is odd, or
$$
Z^{\frac{q^m-q^{m-1}  - \tau \eta(-1)^{\frac{r}{2}} (q-1) q^{m-\frac{r+2}{2}}} {2}}+(q-1)
Z^{\frac{1}{2}\left( q^m-q^{m-1}+\tau\eta(-1)^{\frac{r}{2}}
q^{m-\frac{r+2}{2}}-1 \right)}
$$
if $r$ is even.
\end{lemma}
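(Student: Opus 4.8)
The plan is to reduce everything to counting, for each $b\in\mathbb{F}_q$, the number $N(b)$ of solutions of $Q(x)=b$ in $\mathbb{F}_{q^m}$, which is given explicitly by Lemma~\ref{NQb} (applied with the ambient dimension equal to $m$). Note that the codewords in question are the $q$ vectors obtained by letting $a$ run over $\mathbb{F}_q$ for a fixed quadratic form $Q$, so $W_{r,\tau}=\sum_{a\in\mathbb{F}_q}Z^{\,\mathrm{wt}\left((Q(\alpha^l)+a)_{l=0}^{n-1}\right)}$.

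First I would record two elementary facts. Since $n=\frac{q^m-1}{2}$ and $\alpha$ generates $\mathbb{F}_{q^m}^*$, we have $\alpha^n=-1$, hence $\alpha^{l+n}=-\alpha^l$; therefore $\{\alpha^l:0\le l\le n-1\}$ contains exactly one element from each pair $\{x,-x\}$ with $x\in\mathbb{F}_{q^m}^*$. Moreover $Q(-x)=(-1)^2Q(x)=Q(x)$, so $Q$ is constant on each such pair. Writing $M(b)=\#\{l:0\le l\le n-1,\ Q(\alpha^l)=b\}$, it follows that $M(b)=\frac{N(b)}{2}$ for $b\ne 0$ (the $N(b)$ solutions are all nonzero and pair up freely under $x\mapsto -x$ because $q$ is odd) and $M(0)=\frac{N(0)-1}{2}$ (the solution $x=0$ is removed first). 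Since the number of zero coordinates of $(Q(\alpha^l)+a)_{l=0}^{n-1}$ is $M(-a)$, its weight is $n-M(-a)$, namely $\frac{q^m-N(0)}{2}$ if $a=0$ and $\frac{q^m-1-N(-a)}{2}$ if $a\in\mathbb{F}_q^*$.

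Now I would substitute Lemma~\ref{NQb}. If $r$ is odd, then $N(0)=q^{m-1}$ because $\eta(0)=0$, giving the term $Z^{(q^m-q^{m-1})/2}$; for $a\in\mathbb{F}_q^*$ one has $N(-a)=q^{m-1}+\tau\eta(-1)^{(r-1)/2}\eta(-a)q^{m-(r+1)/2}$, and as $-a$ runs over $\mathbb{F}_q^*$ the value $\eta(-a)$ is $1$ for exactly $\frac{q-1}{2}$ of them and $-1$ for the other $\frac{q-1}{2}$, producing the two remaining terms with exponents $\frac12(q^m-q^{m-1}\mp\tau\eta(-1)^{(r-1)/2}q^{m-(r+1)/2}-1)$. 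If $r$ is even, then $v(0)=q-1$ gives $N(0)=q^{m-1}+\tau\eta(-1)^{r/2}(q-1)q^{m-(r+2)/2}$, hence the first listed weight, while for $a\in\mathbb{F}_q^*$ we have $v(-a)=-1$ uniformly, so $N(-a)=q^{m-1}-\tau\eta(-1)^{r/2}q^{m-(r+2)/2}$ is the same for all $q-1$ nonzero values of $a$, producing the $(q-1)$-fold term. Collecting these contributions gives exactly the stated enumerator. There is no serious obstacle; the proof is careful bookkeeping, and the only delicate point is the passage from $N$ to $M$: the factor $\tfrac12$ and the exceptional treatment of $b=0$ rely on $q$ odd (so $x\mapsto -x$ is fixed-point free on $\mathbb{F}_{q^m}^*$) and on $\alpha^n=-1$, and one must not conflate the $a=0$ codeword with the $a\ne 0$ codewords.
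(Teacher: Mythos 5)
Your proposal is correct and follows essentially the same route as the paper: both reduce the weight computation to the solution counts $N(b)$ of $Q(x)=b$ from Lemma~\ref{NQb}, split into the cases $a=0$, $\eta(-a)=1$, $\eta(-a)=-1$, and obtain the weights $\frac{q^m-N(0)}{2}$ and $\frac{q^m-1-N(-a)}{2}$. The only difference is that you justify the halving step (via $\alpha^n=-1$, $Q(-x)=Q(x)$, and the fixed-point-free pairing $x\mapsto -x$) explicitly, which the paper leaves implicit.
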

\begin{proof}
Let $f(x)=Q(x)+a$. We just computer the weight
$wt(f)$. Let $N(f=0)$ be the number of solutions of $f(x)=0$.

When $a=0$, from Lemma \ref{NQb}, we have
\begin{align*}
N(f=0)=
\left\{
  \begin{array}{ll}
    q^{m-1}, & \text{ if } r \text{ is odd};\\
    q^{m-1}  + \tau \eta(-1)^{\frac{r}{2}} (q-1) q^{m-\frac{r+2}{2}}, & \text{ if } r \text{ is even.}
  \end{array}
\right.
\end{align*}
Then
\begin{align*}
wt(f)=\frac{q^m-N(f=0)}{2}=
\left\{
  \begin{array}{ll}
     \frac{q^m-q^{m-1}}{2} , & \text{ if } r \text{ is odd};\\
     \frac{q^m-q^{m-1}  - \tau \eta(-1)^{\frac{r}{2}} (q-1) q^{m-\frac{r+2}{2}}}{2} , & \text{ if } r \text{ is even.}
  \end{array}
\right.
\end{align*}

When $a\neq 0$ and $\eta(-a)=1$, from  Lemma \ref{NQb}, we have
$$
N(f=0)=
\left\{
  \begin{array}{ll}
    q^{m-1}+\tau\eta(-1)^{\frac{r-1}{2}}
q^{m-\frac{r+1}{2}}, & \text{ if } r \text{ is odd};\\
    q^{m-1}  - \tau \eta(-1)^{\frac{r}{2}}  q^{m-\frac{r+2}{2}}, & \text{ if } r \text{ is even.}
  \end{array}
\right.
$$
Then
$$
wt(f)=\frac{q^m-1-N(f=0)}{2}=
\left\{
  \begin{array}{ll}
   \frac{1}{2}\left( q^m-q^{m-1}-\tau\eta(-1)^{\frac{r-1}{2}}
q^{m-\frac{r+1}{2}}-1 \right), & \text{ if } r \text{ is odd};\\
   \frac{1}{2}\left( q^m-q^{m-1}  + \tau \eta(-1)^{\frac{r}{2}}  q^{m-\frac{r+2}{2}}-1 \right), & \text{ if } r \text{ is even.}
  \end{array}
\right.
$$

When $a\neq 0$ and $\eta(-a)=-1$, from Lemma \ref{NQb}, we have
$$
N(f=0)=
\left\{
  \begin{array}{ll}
    q^{m-1}-\tau\eta(-1)^{\frac{r-1}{2}}
q^{m-\frac{r+1}{2}}, & \text{ if } r \text{ is odd};\\
    q^{m-1}  -\tau \eta(-1)^{\frac{r}{2}}  q^{m-\frac{r+2}{2}}, & \text{ if } r \text{ is even.}
  \end{array}
\right.
$$
Then
$$
wt(f)=\frac{q^m-1-N(f=0)}{2}=
\left\{
  \begin{array}{ll}
   \frac{1}{2}\left( q^m-q^{m-1}+\tau\eta(-1)^{\frac{r-1}{2}}
q^{m-\frac{r+1}{2}}-1 \right), & \text{ if } r \text{ is odd};\\
   \frac{1}{2}\left( q^m-q^{m-1} + \tau \eta(-1)^{\frac{r}{2}}  q^{m-\frac{r+2}{2}}-1 \right), & \text{ if } r \text{ is even.}
  \end{array}
\right.
$$
Hence, this lemma follows.
\end{proof}
\begin{lemma}\label{qurt}
Let $q$ be odd and $Q$ be a quadratic form of
rank $r\geq 1$ and type $\tau$ on
$\mathbb{F}_{q^m}$. Then the weight enumerator
$W_{r,\tau}$ of $\left(Q(\alpha^l) \right)_{l=0}^{n-1}$
is $Z^{\frac{q^m-q^{m-1}}{2}}$ (resp.
 $Z^{\frac{q^m-q^{m-1}  - \tau \eta(-1)^{\frac{r}{2}} (q-1) q^{m-\frac{r+2}{2}}} {2}})$ if $r$ is odd
(resp. even).
\end{lemma}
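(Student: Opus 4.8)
The plan is to recognize that this lemma is precisely the sub-case $a=0$ already contained in the proof of Lemma~\ref{qurt-a}, and to make explicit the short reduction that lets the length-$n$ word be analyzed over all of $\mathbb{F}_{q^m}$. First I would record the structural identity $\alpha^{n}=\alpha^{(q^m-1)/2}=-1$, which holds because $\alpha$ generates $\mathbb{F}_{q^m}^{*}$ and $-1$ is the unique element of order $2$ in that group. Since $Q$ is a quadratic form, $Q(\lambda x)=\lambda^{2}Q(x)$, so $Q(\alpha^{n+l})=Q(-\alpha^{l})=Q(\alpha^{l})$ for every $l$. Because $2n=q^m-1$, the length-$(q^m-1)$ word $\left(Q(\alpha^{l})\right)_{l=0}^{q^m-2}$ is the concatenation of two identical copies of $\left(Q(\alpha^{l})\right)_{l=0}^{n-1}$, while $\{\alpha^{l}:0\leq l\leq q^m-2\}=\mathbb{F}_{q^m}^{*}$. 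Consequently the Hamming weight of $\left(Q(\alpha^{l})\right)_{l=0}^{n-1}$ equals $\tfrac12\,\#\{x\in\mathbb{F}_{q^m}^{*}:Q(x)\neq 0\}=\tfrac12\left(q^m-N(0)\right)$, where $N(b)$ denotes the number of $x\in\mathbb{F}_{q^m}$ with $Q(x)=b$ and we use $Q(0)=0$.

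Next I would compute $N(0)$ from Lemma~\ref{NQb} with $b=0$. Since $\eta(0)=0$, for odd $r$ this yields $N(0)=q^{m-1}$, hence weight $\tfrac{q^m-q^{m-1}}{2}$; since $v(0)=q-1$, for even $r$ it yields $N(0)=q^{m-1}+\tau\,\eta(-1)^{r/2}(q-1)q^{m-(r+2)/2}$, hence weight $\tfrac12\left(q^m-q^{m-1}-\tau\,\eta(-1)^{r/2}(q-1)q^{m-(r+2)/2}\right)$. These are exactly the two exponents appearing in the statement. Because $Q$ is fixed, $\left(Q(\alpha^{l})\right)_{l=0}^{n-1}$ is a single vector, so its weight enumerator is $Z$ raised to the weight just computed, with coefficient $1$.

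I expect no genuine obstacle here: the only step demanding care is the reduction itself, namely that passing from the natural length-$(q^m-1)$ evaluation word to the length-$n=\tfrac{q^m-1}{2}$ word is two-to-one on coordinates and therefore halves the Hamming weight exactly. This depends solely on $\alpha^{n}=-1$ together with the degree-$2$ homogeneity of $Q$; everything afterward coincides with the $a=0$ computation already performed in the proof of Lemma~\ref{qurt-a}.
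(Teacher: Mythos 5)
Your proposal is correct and follows the same route as the paper, which proves Lemma~\ref{qurt} simply by citing the $a=0$ case of Lemma~\ref{qurt-a}; your only addition is to spell out why the weight over the length-$n$ word is half the count over $\mathbb{F}_{q^m}^{*}$ (via $\alpha^{n}=-1$ and $Q(-x)=Q(x)$), a step the paper leaves implicit.
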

\begin{proof}
From results of $a=0$ in Lemma \ref{qurt-a}, this lemma follows.
\end{proof}

We determine the weight enumerators
of $\mathcal{C}_{1,h}$, $\tilde{\mathcal{C}}_{1,h}$, $\mathcal{C}_{1,h}$, and  $\tilde{\mathcal{C}}_{1,h}$ in the following theorem.
\begin{theorem}
When $m$ is odd, the weight enumerator of the code
$\mathcal{C}_{1,h}$   is
$$
1+(q-1)Z^{\frac{q^m-1}{2}}+\sum_{r=2h+1}^{m}\sum_{\tau\in \{1,-1\}}
a_{r,\tau}W_{r,\tau},
$$
where $a_{r,\tau}$ is given in  Theorem \ref{tart} and
$W_{r,\tau}$ is given in Lemma \ref{qurt-a}.
The weight enumerator of the code
$\tilde{\mathcal{C}}_{1,h}$   is
$$
1+\sum_{r=2h+1}^{m}\sum_{\tau\in \{1,-1\}}
a_{r,\tau}W_{r,\tau}
$$
where $a_{r,\tau}$ is given in  Theorem \ref{tart} and  $W_{r,\tau}$ is given in Lemma \ref{qurt}.
When $m$ is even, the weight enumerator of the code
$\mathcal{C}_{2,h}$   is
$$
1+(q-1)Z^{\frac{q^m-1}{2}}+
\sum_{r=2h}^{m}\sum_{\tau\in \{1,-1\}}
a_{r,\tau}W_{r,\tau}
$$
where $a_{r,\tau}$ is given in Proposition \ref{part} and
$W_{r,\tau}$ is given
in Lemma \ref{qurt-a}.
The weight enumerator of the code
$\tilde{\mathcal{C}}_{2,h}$   is
$$
1+\sum_{r=2h}^{m}\sum_{\tau\in \{1,-1\}}
a_{r,\tau}W_{r,\tau},
$$
where $a_{r,\tau}$ is given in Proposition \ref{part} and $W_{r,\tau}$ is given in Lemma \ref{qurt}.
\end{theorem}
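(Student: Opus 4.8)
The plan is to realize each codeword of $\mathcal{C}_{1,h}$ as a pair $(Q,a)\in Q_1\times\mathbb{F}_q$ via $c(Q,a)=\bigl(Q(\alpha^l)+a\bigr)_{l=0}^{n-1}$, where $Q(x)=\mathrm{Tr}_1^m\bigl(\sum_{j=h}^{(m-1)/2}a_jx^{q^j+1}\bigr)\in Q_1$, and then to group the codewords according to the rank $r$ and type $\tau$ of $Q$. First I would verify that $(Q,a)\mapsto c(Q,a)$ is a bijection, so that no multiplicities arise. Since $q$ is odd, $q^j+1$ is even and $Q(-x)=Q(x)$; as $\{\pm\alpha^l:0\le l\le n-1\}=\mathbb{F}_{q^m}^*$, the word $c(Q,0)$ already determines $Q$ as a function on all of $\mathbb{F}_{q^m}$. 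A nonzero $Q\in Q_1$ is not identically zero, because $Q(x)=B_Q(x,x)$ and the linearized polynomial $L(x)=\sum_{j=h}^{(m-1)/2}\bigl((a_j/2)x^{q^j}+(a_j/2)^{q^{m-j}}x^{q^{m-j}}\bigr)$ attached to $B_Q$ has no cancellation of terms and $q$-degree at most $m-h<m$, hence fewer than $q^m$ roots; so $Q\mapsto\bigl(Q(\alpha^l)\bigr)_{l}$ is injective on $Q_1$. Finally $\sum_{x\in\mathbb{F}_{q^m}}x^{q^j+1}=0$ gives $\sum_{l=0}^{n-1}Q(\alpha^l)=0$, so the coordinate sum of $c(Q,a)$ equals $na$ with $n\not\equiv 0\pmod p$, which recovers $a$. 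Thus the codewords are in bijection with $Q_1\times\mathbb{F}_q$ and $|\mathcal{C}_{1,h}|=q\cdot q^{m((m+1)/2-h)}$.

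Next I would show that for $r\ge 1$ the quantity $a_{r,\tau}$ of Theorem \ref{tart} equals the number of $Q\in Q_1$ of rank $r$ and type $\tau$. The map $Q\mapsto B_Q$ is a bijection of $Q_1$ onto $S_1$ preserving rank (Lemma \ref{rankQB}) and type: a basis in which $Q\cong\sum_{i=1}^r a_ix_i^2$ also makes the matrix of $B_Q$ equal to $\mathrm{diag}(a_1,\dots,a_r,0,\dots,0)$, so both have type $\eta(a_1\cdots a_r)$. Since $S_1$ is an $\mathbb{F}_q$-subspace of $X(m,q)$, for each fixed $A\in S_1$ the translation $B\mapsto A-B$ permutes $S_1$; hence $a_{r,\tau}=|(S_1\times S_1)\cap R_{r,\tau}|/|S_1|=|S_1\cap X_{r,\tau}|$, and the numbers $a_{r,\tau}$ are exactly the inner distribution of $S_1$. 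By Proposition \ref{code-design}, $S_1$ is a proper $(2h+1)$-code and an $(m+1-2h)$-design in $X(m,q)$, so this inner distribution is the one of Theorem \ref{tart}; in particular $a_{r,\tau}=0$ for $1\le r\le 2h$.

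Then I would assemble the weight enumerator by splitting off the rank-$0$ contribution. The only rank-$0$ form is $Q=0$, for which $c(0,a)=(a,\dots,a)$ has weight $0$ if $a=0$ and weight $n=(q^m-1)/2$ otherwise, contributing $1+(q-1)Z^{\frac{q^m-1}{2}}$. For a fixed $Q$ of rank $r\ge 1$ and type $\tau$, Lemma \ref{qurt-a} gives $\sum_{a\in\mathbb{F}_q}Z^{wt(c(Q,a))}=W_{r,\tau}$; summing this over the $a_{r,\tau}$ such forms, over $\tau\in\{1,-1\}$, and using $a_{r,\tau}=0$ for $r\le 2h$, produces
$$
1+(q-1)Z^{\frac{q^m-1}{2}}+\sum_{r=2h+1}^{m}\sum_{\tau\in\{1,-1\}}a_{r,\tau}W_{r,\tau},
$$
the claimed enumerator of $\mathcal{C}_{1,h}$. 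For $\tilde{\mathcal{C}}_{1,h}$ the argument is identical with $a=0$ fixed, so the rank-$0$ term is just $1$ and each rank-$r$, type-$\tau$ form contributes the single monomial $W_{r,\tau}$ of Lemma \ref{qurt}. The even-$m$ statements for $\mathcal{C}_{2,h}$ and $\tilde{\mathcal{C}}_{2,h}$ follow in the same way, with $Q_1,S_1$ replaced by $Q_2,S_2$ and Theorem \ref{tart} by Proposition \ref{part}: here $S_2$ is a proper $(2h)$-code and $(m+1-2h)$-design by Proposition \ref{code-design}, so $a_{r,\tau}=0$ for $1\le r\le 2h-1$ and the sum over $r$ starts at $2h$.

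The argument is essentially bookkeeping; the only place needing genuine care is the bijectivity of $(Q,a)\mapsto c(Q,a)$ (so that every codeword is counted exactly once), together with the compatibility of the type of $Q$ with the type of $B_Q$ used to index the association scheme $\bigl(X(m,q),(R_{r,\tau})\bigr)$. Once these are in place, the weight enumerators drop out by combining Lemmas \ref{qurt-a} and \ref{qurt}, Proposition \ref{code-design}, and Theorem \ref{tart} / Proposition \ref{part}.
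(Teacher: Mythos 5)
Your proposal is correct and follows essentially the same route as the paper: decompose the code over pairs $(Q,a)\in Q_1\times\mathbb{F}_q$, split off the rank-$0$ term, use Lemma \ref{qurt-a} (resp.\ Lemma \ref{qurt}) for each fixed form, and identify the multiplicities $a_{r,\tau}$ with the inner distribution of $S_1$ (resp.\ $S_2$) via Proposition \ref{code-design} and Theorem \ref{tart} / Proposition \ref{part}. The paper's proof simply asserts the points you verify explicitly (injectivity of $(Q,a)\mapsto c(Q,a)$, the equality of the inner distribution with the rank/type counts via translation in the subgroup $S_1$, and the compatibility of the type of $Q$ with that of $B_Q$), so your write-up is a more careful version of the same argument.
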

\begin{proof}
We first give the weight enumerator of
$\mathcal{C}_{1,h}$. When $m$ is odd, we have
\begin{align*}
\mathcal{C}_{1,h}=
\bigcup_{Q\in Q_1}\left
(Q(\alpha^i)+a\right)_{l=0}^{n-1}.
\end{align*}
The weight enumerator of $\{(a)_{l=0}^{n-1}:
a\in \mathbb{F}_q\}$ is $1+(q-1)\frac{q^m-1}{2}$. From Proposition \ref{code-design}, $S_1$ is
a proper $(2h+1)$-code.
For any nonzero $Q(x)\in Q_1$, we have  $rank(Q(x))=r\geq 2h+1$.   The weight enumerator
$W_{r,\tau}$ of $\left(Q(\alpha^l)+a \right)_{l=0}^{n-1}$ with rank $r>0$ is given in Lemma \ref{qurt-a}. The inner distribution of $Q_1$
is just the inner distribution of $S_1$, which is given in  Theorem \ref{tart}. Hence, we have the
weight enumerator of $\mathcal{C}_{1,h}$
$$
1+(q-1)Z^{\frac{q^m-1}{2}}+\sum_{r=2h+1}^{m}\sum_{\tau\in \{1,-1\}}
a_{r,\tau}W_{r,\tau}.
$$
From the similar method, we have the other results. Hence, this theorem follows.
\end{proof}

\begin{lemma}\label{coset-leader}
When $m\geq 2$ and $1\leq i\leq  \lfloor \frac{m+3}{4}
\rfloor $, then
$\delta_i=\frac{q^m-q^{m-1}}{2}-1-\frac{q^{
\lfloor \frac{m-3}{2} \rfloor+i}-1}{2}$
is the $i$-th largest coset leader module $n$.
\end{lemma}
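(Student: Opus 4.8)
The plan is to transfer the question to $q$-cyclotomic cosets modulo $q^m-1$, where multiplication by $q$ is a cyclic shift of base-$q$ digits. First I would set up a dictionary between the two rings: for $a\in\{0,\dots,n-1\}$ and $j\ge0$, writing $q^ja=b+cn$ with $b=q^ja\bmod n$ gives $q^j(2a)=2b+c(q^m-1)$, and since $0\le 2b\le 2n-2=q^m-3$ this shows $q^ja\bmod n=\tfrac{1}{2}\bigl(q^j(2a)\bmod(q^m-1)\bigr)$ (here one uses that $q$ is odd, so $2n=q^m-1$). Hence $a$ is a coset leader modulo $n$ iff $2a$ is a coset leader modulo $q^m-1$, and since $b\mapsto b/2$ is increasing, the $i$-th largest coset leader modulo $n$ is one half of the $i$-th largest \emph{even} coset leader modulo $q^m-1$. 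With $e:=\lfloor\tfrac{m-3}{2}\rfloor$ one has $2\delta_i=(q^m-1)-q^{m-1}-q^{e+i}$, so it remains to prove that $2\delta_i$ is the $i$-th largest even coset leader modulo $q^m-1$ for $1\le i\le t:=\lfloor\tfrac{m+3}{4}\rfloor$.

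Throughout, recall that a positive integer $N<q^m-1$ is a coset leader modulo $q^m-1$ exactly when its length-$m$ base-$q$ digit string (read from the most significant digit) is lexicographically minimal among its cyclic shifts, and that numerical order of equal-length strings is this lexicographic order. Since $e+t\le\tfrac{3m-3}{4}\le m-2$ (checked directly for $m\le4$, and from the displayed bound for $m\ge5$), the digit string of $2\delta_i$ has $q-2$ in positions $m-1$ and $e+i$ and $q-1$ elsewhere; writing $q-2\leftrightarrow0$, $q-1\leftrightarrow1$ it is the cyclic word $0\,1^{a_0}\,0\,1^{a_1}$ with $a_0=m-2-(e+i)$, $a_1=e+i$, which is minimal among its shifts iff $a_0\le a_1$, i.e. $e+i\ge\tfrac{m-2}{2}$. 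As $\lceil\tfrac{m-2}{2}\rceil=e+1$, this holds for every $i\ge1$; hence $2\delta_1>\cdots>2\delta_t$ are even coset leaders modulo $q^m-1$.

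The crux is to show these are the only even coset leaders that are $\ge2\delta_t$. Let $N$ be a coset leader modulo $q^m-1$ with $N\ge2\delta_t$. One checks $2\delta_t>(q-2)q^{m-1}$, and a coset leader whose leading digit equals $q-1$ must be $q^m-1$ (all digits maximal); so the leading digit of $N$ is $q-2$, and we write $N=(q-2)q^{m-1}+(q^{m-1}-1-M)$ with $0\le M\le q^{e+t}$. If some lower digit of $N$ were $<q-2$, the cyclic shift of $N$ beginning there would be numerically smaller than $N$; hence every digit in positions $0,\dots,m-2$ is $q-2$ or $q-1$, so $M=\sum_{k=1}^{r}q^{p_k}$ for distinct $p_1>\cdots>p_r$ and $N=q^m-q^{m-1}-1-\sum_kq^{p_k}$, whose digit word is $0\,1^{a_0}\,0\,1^{a_1}\cdots0\,1^{a_r}$ with $a_0=m-2-p_1$, $a_r=p_r$, $\sum_ja_j=m-1-r$. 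Minimality forces $a_0=\min_ja_j$; for $r\ge2$ one has $p_1\le e+t-1$, whence $a_0\ge m-1-(e+t)\ge\tfrac{m-1}{4}$, and then $(r+1)\tfrac{m-1}{4}\le(r+1)a_0\le m-1-r$ forces $r\le2$. A parity count finishes it: $q^m-q^{m-1}-1$ is odd and each $q^{p_k}$ is odd, so $N$ is even iff $r$ is odd; thus $r\ne0,2$, leaving $r=1$, for which the leader condition $a_0\le a_1$ reads $p_1\ge\lceil\tfrac{m-2}{2}\rceil=e+1$, while $M\le q^{e+t}$ gives $p_1\le e+t$, i.e. $N=q^m-q^{m-1}-1-q^{p_1}\in\{2\delta_1,\dots,2\delta_t\}$.

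Putting the pieces together, the even coset leaders modulo $q^m-1$ that are $\ge2\delta_t$ are precisely $2\delta_1>\cdots>2\delta_t$; hence for each $i\le t$ there are exactly $i$ even coset leaders $\ge2\delta_i$, so $2\delta_i$ is the $i$-th largest, and halving, $\delta_i$ is the $i$-th largest coset leader modulo $n$. I expect the main obstacle to be the digit/necklace bookkeeping of the third paragraph: justifying rigorously that minimality forces every non-$(q-1)$ digit of the lower part of $N$ to be exactly $q-2$, and then bounding the number $r+1$ of maximal $(q-1)$-runs by combining $(r+1)\min_ja_j\le\sum_ja_j=m-1-r$ with the bound on $p_1$ coming from $N\ge2\delta_t$; the small cases $m\in\{2,3,4\}$, where $e<0$ is possible and the generic inequalities are tight, are best dispatched by hand.
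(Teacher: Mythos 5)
Your proof is correct, and it takes a genuinely different route from the paper's. The paper argues directly with residues modulo $n=\frac{q^m-1}{2}$: it writes a $q$-adic-type expansion of $\delta_i$ and of a hypothetical coset leader $s$ with $\delta_i<s<\delta_{i-1}$, then eliminates $s$ by a case analysis on the digit $s_k$, leaning on Lemmas 8 and 9 of \cite{ZSK19} for the admissible shape of $s$ and for the fact that $\delta_1$ is the largest leader; the case $q=3$ is dispatched by citing \cite{LDXG17}, and the even-$m$ case is only sketched. You instead lift everything to $\mathbb{Z}_{q^m-1}$ via the doubling map $a\mapsto 2a$, note that cosets modulo $n$ correspond exactly to the even cosets modulo $q^m-1$ with leaders and numerical order preserved (valid because $q$ is odd, so $2n=q^m-1$), and then work with honest base-$q$ digit necklaces, where ``coset leader'' becomes ``lexicographically minimal rotation.'' What your approach buys is a uniform, essentially self-contained argument covering all odd $q$ and both parities of $m$ simultaneously, with the parity count ($N$ even forces an odd number of non-leading $(q-2)$-digits, so $r=1$) replacing most of the paper's digit case analysis; what it costs is the necklace bookkeeping you flag yourself, which does go through: I checked that $e+t\le m-2$ and $e+t\le\frac{3(m-1)}{4}$ hold for all $m\ge 2$ (directly for $m\le 4$, from $e+t\le\frac{m-3}{2}+\frac{m+3}{4}$ for $m\ge 5$), that a leader with leading digit $q-2$ can have no digit below $q-2$, that minimality forces $a_0=\min_j a_j$ so $(r+1)a_0\le m-1-r$ yields $r\le 2$, and that $\lceil\frac{m-2}{2}\rceil=e+1$ pins $p_1$ to $\{e+1,\dots,e+t\}$, i.e.\ $N\in\{2\delta_1,\dots,2\delta_t\}$. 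Your version is arguably more rigorous than the one printed, which defers key structural facts to \cite{ZSK19} and omits the even-$m$ details.
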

\begin{proof}
When $q=3$, from \cite{LDXG17}, this lemma holds.
Suppose that $q>3$.
When $m$ is odd,
then the $q$-adic expansion of
$\delta_i$ is
$$
\delta_i
=(\frac{q-3}{2})q^{m-1}
+(q-1)\sum_{j=\frac{m-3}{2}+i}^{m-2}
q^j+(\frac{q-1}{2})\sum_{j=0}^{\frac{m-3}{2}+i-1}q^j.
$$
and its cyclotomic coset $C_{\delta_i}$ is
$$
\left\{\frac{q^{m}-1-
q^{l-1}-q^{l+\frac{m-3}{2}+i}}{2}: 1\leq l\leq \frac{m+1}{2}-i   \right\}\bigcup
\left\{ \frac{q^{m}-1-
q^{l-1}-q^{l-\frac{m+3}{2}+i}}{2}: \frac{m+3}{2}-i\leq l\leq m \right\}.
$$
From Lemma 9 in \cite{ZSK19},
$\delta_1$ is the largest    coset leader module $n$. Then we just prove that for
$i> 1$, there does not exist a coset leader
$s$ satisfying $\delta_i< s< \delta_{i-1}$.

Suppose such a
coset leader $s$ satisfying $\delta_i< s< \delta_{i-1}$ exists.
From Lemma 8 in \cite{ZSK19},  then the $q$-adic expansion of $s$ is
$$
s=(\frac{q-3}{2})q^{m-1}
+(q-1)\sum_{j=k+1}^{m-2}
q^j+ \sum_{j=0}^{k}s_jq^j,
$$
where $k=\frac{m-3}{2}+i-1$, $\frac{q-3}{2}
\leq s_j\leq q-1$ and
$(\frac{q-1}{2})\sum_{j=0}^{k}q^j
<\sum_{j=0}^{k}s_jq^j<
(q-1)q^{k} +
(\frac{q-1}{2})\sum_{j=0}^{k-1}q^j
$.

When $s_k=\frac{q-1}{2}$,
from $t>\delta_i$, we have $(q^{m-2-t}s \mod n) < s$, where $0\leq t<k$ and $t$ is the largest index such that $s_t>\frac{q-1}{2}$. It is a contradiction to the coset leader $s$.

When  $\frac{q-1}{2}< s_k < q-1$, then
$(q^{m-1-k}s \mod n) < s$, which is a contradiction to the coset leader $s$.

When  $s_k=q-1$, from
$s<\delta_{i-1}$, there exists a $t$ satisfying   $(q^{m-1-t}s \mod n) < s$ or $(q^{m-2-t}s \mod n) < s$, where
$0\leq t<k$.  It is a contradiction to the coset leader $s$.

Hence, $\delta_i$ is the $i$-th largest coset leader module $n$ when $m$ is odd. From the similar method, it holds when $m$ is even. \end{proof}

The following theorem gives  parameters of the BCH code  $\mathcal{C}_{(n,q,m,\delta_i)}$.
\begin{theorem}
Let $m\geq 2$,  $1\leq i\leq  \lfloor \frac{m+3}{4}
\rfloor$,  and
$h=\lfloor \frac{m}{2} \rfloor -i+1$. Then the BCH code
$\mathcal{C}_{(n,q,m,\delta_i)}$  is
$\mathcal{C}_{1,h}$ (resp. $\mathcal{C}_{2,h}$) for
$m$ odd (resp. even) and it
has parameters
$[n,(i+1+\frac{m-1}{2}
-\lfloor \frac{m}{2} \rfloor)m,\delta_i]$ and
the  Bose distance $d_B=\delta_i$.
The code $\tilde{\mathcal{C}}_{(n,q,m,\delta_i)}$ is $\tilde{\mathcal{C}}_{1,h}$ (resp. $\tilde{\mathcal{C}}_{2,h}$)
for $m$ odd (resp. even) and it
has dimension
$(i+\frac{m-1}{2}
-\lfloor \frac{m}{2} \rfloor)m$.
\end{theorem}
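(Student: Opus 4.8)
The plan is to prove the theorem by identifying the set of nonzeros of the BCH code with the frequencies carried by the trace representation, after which the parameters follow from Lemma~\ref{bose-d}, Lemma~\ref{coset-leader} and the weight enumerators already obtained. First I would pin down the nonzeros of $\mathcal{C}_{(n,q,m,\delta_i)}$: its parity-check polynomial is $(x-1)\prod_{s\ge \delta_i,\,s\in\Gamma_{(n,q)}}m_s(x)$, so the nonzeros are indexed by $\{0\}\cup\bigcup_{s\ge\delta_i,\,s\in\Gamma_{(n,q)}}C_s$, and by Lemma~\ref{coset-leader} the coset leaders that are $\ge\delta_i$ are precisely $\delta_1>\delta_2>\cdots>\delta_i$. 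Hence the nonzeros of $\mathcal{C}_{(n,q,m,\delta_i)}$ are indexed by the disjoint union $\{0\}\cup C_{\delta_1}\cup\cdots\cup C_{\delta_i}$, and those of $\tilde{\mathcal{C}}_{(n,q,m,\delta_i)}=\langle (x-1)g_{(n,q,m,\delta_i)}(x)\rangle$ by $C_{\delta_1}\cup\cdots\cup C_{\delta_i}$.

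The heart of the proof is matching these frequencies. Since $q$ is odd, $q^{j}+1$ is even and $\alpha^{(q^{j}+1)l}=\beta^{(q^{j}+1)l/2}$, so by Delsarte's theorem the summand $a_j\alpha^{(q^{j}+1)l}$ carries the frequency $-\tfrac{q^{j}+1}{2}\bmod n$. With $h=\lfloor m/2\rfloor-i+1$ I would prove
$$-\frac{q^{j}+1}{2}\ \bmod\ n\ \in\ C_{\delta_{\lfloor m/2\rfloor+1-j}},\qquad h\le j\le\lfloor m/2\rfloor,$$
and note that $j\mapsto\lfloor m/2\rfloor+1-j$ is a bijection from $\{h,\dots,\lfloor m/2\rfloor\}$ onto $\{1,\dots,i\}$. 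This is a computation modulo $q^{m}-1$: starting from the explicit $q$-adic expansion of $\delta_k$ written down in the proof of Lemma~\ref{coset-leader}, one multiplies $2\delta_k$ repeatedly by $q$ and follows the base-$q$ digits until the representative $-(q^{\lfloor m/2\rfloor+1-k}+1)$ of the cyclotomic coset appears, then divides by $2$. The same digit bookkeeping gives the coset sizes needed for the dimension: for $m$ odd the digit $\tfrac{q-3}{2}$ (which is $0$ when $q=3$) occurs exactly once in the expansion of $\delta_k$, so no nontrivial cyclic shift fixes it and $\ell_{\delta_k}=m$ for every $k$; for $m$ even the same holds for $k\ge2$, whereas $\delta_1$ has a $q$-adic expansion of period $m/2$, giving $\ell_{\delta_1}=m/2$. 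Feeding the non-conjugate representatives $i_0=0$ and $i_j=-\tfrac{q^{j}+1}{2}\bmod n$ ($h\le j\le\lfloor m/2\rfloor$) into Delsarte's theorem then reproduces $\mathcal{C}_{1,h}$ (resp.\ $\mathcal{C}_{2,h}$): the size-$1$ coset $C_0$ produces the constant $a\in\mathbb{F}_q$; each size-$m$ coset produces a summand with $a_j\in\mathbb{F}_{q^{m}}$; and when $m$ is even the size-$(m/2)$ coset $C_{\delta_1}$ corresponds to $j=m/2$, where $\alpha^{(q^{m/2}+1)l}$ is the norm of $\alpha^{l}$ down to $\mathbb{F}_{q^{m/2}}$, so $\mathrm{Tr}_1^{m}$ and $\mathrm{Tr}_1^{m/2}$ agree on it up to the scalar $2$ and the parameter may be taken in $\mathbb{F}_{q^{m/2}}$. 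Dropping the $C_0$ coordinate gives $\tilde{\mathcal{C}}_{(n,q,m,\delta_i)}=\tilde{\mathcal{C}}_{1,h}$ (resp.\ $\tilde{\mathcal{C}}_{2,h}$).

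Granting the identification, the rest is short. The dimension of $\mathcal{C}_{(n,q,m,\delta_i)}$ equals $1+\sum_{k=1}^{i}\ell_{\delta_k}$, which by the coset sizes from Step~2 is $im+1$ for $m$ odd and $im-m/2+1$ for $m$ even, so $\tilde{\mathcal{C}}_{(n,q,m,\delta_i)}$ has the stated dimension $\sum_{k=1}^{i}\ell_{\delta_k}$, one less. For the distances, $\delta_i$ is a coset leader by Lemma~\ref{coset-leader}, so Lemma~\ref{bose-d} gives $d_B=\delta_i$, and the BCH bound gives $d\ge\delta_i$. Conversely, by Proposition~\ref{code-design} the set $S_1$ is a proper $(2h+1)$-code (resp.\ $S_2$ a proper $(2h)$-code), hence by Lemma~\ref{rankQB} the family $Q_1$ (resp.\ $Q_2$) contains a quadratic form of the minimal occurring rank $r$, with $r=2h+1$ for $m$ odd and $r=2h$ for $m$ even; a direct check shows that the exponent of $q$ appearing in Lemma~\ref{qurt-a} for rank $r$ equals $\lfloor (m-3)/2\rfloor+i$, so one of the weights listed there is $\tfrac12\big(q^{m}-q^{m-1}-q^{\lfloor (m-3)/2\rfloor+i}-1\big)=\delta_i$ (for $m$ even, the type of a minimal-rank form—which controls both whether the $a\ne0$ weight equals $\delta_i$ and whether the $a=0$ weight stays $\ge\delta_i$—is forced to be the favourable one by the inequality $d\ge\delta_i$ already obtained). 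Combined with the weight enumerator above this yields $d\le\delta_i$, so $d=d_B=\delta_i$.

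The main obstacle is the second step: the modular bookkeeping modulo $q^{m}-1$ that simultaneously (i) places $-\tfrac{q^{j}+1}{2}$ in the cyclotomic coset of the correct $\delta_k$ and (ii) determines the sizes $\ell_{\delta_k}$, in particular handling the period-$(m/2)$ coset attached to $\delta_1$ in the even-$m$ case. Everything after that is an application of Lemma~\ref{bose-d}, Proposition~\ref{code-design}, Lemma~\ref{rankQB}, Lemma~\ref{qurt-a} and the weight-enumerator theorem.
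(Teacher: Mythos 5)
Your proposal follows essentially the same route as the paper: identify the nonzeros of $\mathcal{C}_{(n,q,m,\delta_i)}$ with the frequencies $-(q^{j}+1)/2 \bmod n$ via the cyclotomic-coset computation behind Lemma~\ref{coset-leader}, invoke Proposition~\ref{code-design} (with Lemma~\ref{rankQB}) to pin the minimal rank at $2h+1$ (resp.\ $2h$), and read the weight $\delta_i$ off Lemma~\ref{qurt-a} to close the gap between the BCH bound and $d_B=\delta_i$ from Lemma~\ref{bose-d}. You are in fact more careful than the paper in two spots: you address the type of a minimal-rank form in the even-$m$ case (which the paper dispatches with ``similar discussion''), and your dimension count $im+1$ for $m$ odd and $im-\frac{m}{2}+1$ for $m$ even is the correct one --- it agrees with the paper's own examples (e.g.\ dimension $m+1$ when $i=1$ and $m$ is odd) and shows that the formula $(i+1+\frac{m-1}{2}-\lfloor\frac{m}{2}\rfloor)m$ printed in the theorem has the ``$+1$'' misplaced inside the parenthesis.
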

\begin{proof}
Note that $\delta_i$ is the $i$-th largest coset leader module $n$.
When $m$ is odd, we have 
$l_i=\mid C_{\delta_i}\mid=m$ and
\begin{align*}
\mathrm{Tr}_1^{l_i}(b\beta^{-\delta_il})
=&\mathrm{Tr}_1^m\left(b\alpha^{\left(-q^m+
q^{m-1}+1+q^{\frac{m-3}{2}+i}\right)l}\right)\\
=& \mathrm{Tr}_1^m\left(b\alpha^{\left(
q^{m-1}+q^{\frac{m-3}{2}+i}\right)l}\right)\\
=& \mathrm{Tr}_1^m\left(b^{-\frac{m-3}{2}-i}\alpha^{
\left(q^{\frac{m+1}{2}-i}+1\right)l}\right).
\end{align*}
Hence the code
$\mathcal{C}_{(n,q,m,\delta_i)}=
\mathcal{C}_{1,h}$  and the
 dimension of $\mathcal{C}_{(n,q,m,\delta_i)}$ is
$(i+1+\frac{m-1}{2}
-\lfloor \frac{m}{2} \rfloor)m$.
When $r=2h+1$, the $\left(Q(\alpha^l)+a \right)_{l=0}^{n-1}$ has a codeword
of weight $\delta_i$.
From Lemma \ref{bose-d}, the code $\mathcal{C}_{(n,q,m,\delta_i)}$ has Bose distance $\delta_i$. Hence, the code $\mathcal{C}_{(n,q,m,\delta_i)}$
has parameters
$[n,(i+1+\frac{m-1}{2}
-\lfloor \frac{m}{2} \rfloor)m,\delta_i]$ and Bose distance $\delta_i$.
From the similar discussion, we have the other results of this proposition.
\end{proof}

\begin{example}
Let $q=3$, $h=\lfloor \frac{m}{2} \rfloor $, and
$i=1$.  When $m$ is odd (resp. even), the code  $\mathcal{C}_{(n,3,m,\delta_1)}$
is $\mathcal{C}_{1,\frac{m-1}{2}}$ (
resp. $\mathcal{C}_{2,\frac{m}{2}}$),   and
it has parameters
$[\frac{3^m-1}{2}, m+1, \delta_1]$
(resp. $[\frac{3^m-1}{2},
\frac{m+2}{2}, \delta_1]$ ) and the weight distribution in Table II (resp. Table I)
in Theorem 19 \cite{LDXG17}.
 When $m$ is odd (resp. even), the code  $\tilde{\mathcal{C}}_{(n,3,m,\delta_1)}$
is $\tilde{\mathcal{C}}_{1,\frac{m-1}{2}}$ (
resp. $\tilde{\mathcal{C}}_{2,\frac{m}{2}}$),   and
it has parameters
$[\frac{3^m-1}{2}, m,  3^{m-1}]$
(resp. $[\frac{3^m-1}{2},
\frac{m}{2},  3^{m-1}+ 3^{\frac{m-2}{2}}]$ ) and the weight enumerator $1+(3^m-1)Z^{3^{m-1}}$ (resp.  $1+(3^{\frac{m}{2}}-1 )Z^{3^{m-1}+3^{\frac{m-2}{2}}}$)
in Theorem 22 \cite{LDXG17}.
\end{example}
\begin{example}
Let $q=3$, $h=\lfloor \frac{m}{2} \rfloor-1 $, and
$i=2$.  When $m$ is odd (resp. even), the code  $\mathcal{C}_{(n,3,m,\delta_2)}$
is $\mathcal{C}_{1,\frac{m-3}{2}}$ (
resp. $\mathcal{C}_{2,\frac{m-2}{2}}$),   and
it has parameters
$[\frac{3^m-1}{2}, 2m+1, \delta_2]$
(resp. $[\frac{3^m-1}{2},
\frac{3m+2}{2}, \delta_2]$ ) and the weight distribution in Table VI (resp. Table V)
in Theorem 29 \cite{LDXG17}.
 When $m$ is odd (resp. even), the code  $\tilde{\mathcal{C}}_{(n,3,m,\delta_2)}$
is $\tilde{\mathcal{C}}_{1,\frac{m-3}{2}}$ (
resp. $\tilde{\mathcal{C}}_{2,\frac{m-2}{2}}$),   and
it has parameters
$[\frac{3^m-1}{2}, 2m,  3^{m-1}-3^{\frac{m-1}{2}}]$
(resp. $[\frac{3^m-1}{2},
\frac{3m}{2},  3^{m-1}- 3^{\frac{m-2}{2}}]$ ) and the weight distribution in Table IV (resp.  Table III)
in Theorem 26 \cite{LDXG17}.
\end{example}

\begin{example}
When $h=\lfloor \frac{m}{2} \rfloor $ and
$i=1$,  the code   $\mathcal{C}_{(n,q,m,\delta_1)}$
(resp. $\tilde{\mathcal{C}}_{(n,q,m,\delta_1)}$ )
is given in Theorem 5 (resp. Theorem 4) in
\cite{ZSK19}.
When $h=\lfloor \frac{m}{2} \rfloor-1 $ and
$i=2$,  the code   $\mathcal{C}_{(n,q,m,\delta_2)}$
(resp. $\tilde{\mathcal{C}}_{(n,q,m,\delta_2)}$ )
is given in Theorem 7 (resp. Theorem 6) in \cite{ZSK19}.
\end{example}
\begin{remark}
Let $q=3$, $h=\lfloor \frac{m}{2} \rfloor-1 $, and $i=2$. The code  $\tilde{\mathcal{C}}_{(n,3,m,\delta_2)}$ is a three-weight code with the weight distribution in Table 1 of \cite{TDX19}. Then these codes can be used to construct $2$-designs.
\end{remark}

\section{Conclusion}
This paper studies a class of linear codes of length $
\frac{q^m-1}{2}$ over $\mathbb{F}_q$ with special trace representation, uses association schemes from bilinear forms associated with quadratic forms, and determines the weight enumerators of these codes. From determining
some  cyclotomic coset leaders $\delta_i$ of
cyclotomic cosets modulo  $
\frac{q^m-1}{2}$, we prove that
narrow-sense BCH codes of length $
\frac{q^m-1}{2}$ with designed distance
$\delta_i=\frac{q^m-q^{m-1}}{2}-1-\frac{q^{
\lfloor \frac{m-3}{2} \rfloor+i}-1}{2}$ have
the corresponding trace representation and have the  minimal
distance $d=\delta_i$ and the Bose distance
$d_B=\delta_i$, where
$1\leq i\leq \lfloor \frac{m+3}{4} \rfloor$.
It is interesting to determine parameters of more BCH codes.

{\bf Acknowledgement.} 
This work was supported by SECODE project  and the National Natural Science Foundation of China
(Grant No. 11871058, 11531002, 11701129).  C. Tang also acknowledges support from
14E013, CXTD2014-4 and the Meritocracy Research Funds of China West Normal University. Y. Qi also acknowledges support from Zhejiang provincial Natural Science Foundation of China (LQ17A010008, LQ16A010005).

\end{document}